\documentclass[11pt,reqno]{amsart}
\usepackage[utf8]{inputenc}
\usepackage[english]{babel}
\usepackage{csquotes}

\usepackage[
	backend=biber,
	style=alphabetic,
	sorting=nyt,
	url=false,
	eprint=true
]{biblatex}
\usepackage{hyperref}
\AtBeginBibliography{\small}

\usepackage{geometry}
\usepackage{amsmath, amssymb, amsthm, mathtools, cancel, graphicx, wrapfig, forest, verbatim, amsfonts, bbm, caption, subcaption} 

\usepackage{pgfplots}
\pgfplotsset{compat=1.18}

\usepackage{marvosym}

\newtheorem{thm}{Theorem}[section]
\newtheorem{cor}[thm]{Corollary}

\newtheorem{lem}[thm]{Lemma}
\newtheorem{prop}[thm]{Proposition}

\theoremstyle{definition}
\newtheorem{defn}[thm]{Definition}
\newtheorem{eg}[thm]{Example}

\theoremstyle{remark}
\newtheorem*{rem}{Remark}

\numberwithin{equation}{section}

\usepackage{parskip}
\usepackage{setspace}
\usepackage{mathrsfs}

\makeatletter
\def\thm@space@setup{%
  \thm@preskip=\parskip \thm@postskip=0pt
}
\makeatother

\usepackage{enumitem}

\newcommand{\N}{\mathbb{N}}
\newcommand{\Z}{\mathbb{Z}}
\newcommand{\Q}{\mathbb{Q}}
\newcommand{\R}{\mathbb{R}}

\renewcommand{\S}{\mathbb{S}}

\newcommand{\T}{\mathbb{T}}

\newcommand{\ep}{\varepsilon}
\newcommand{\ph}{\varphi}
\newcommand{\longto}{\longrightarrow}

\newcommand{\id}{\mathrm{id}}

\newcommand{\Hmm}[1]{\leavevmode{\marginpar{\tiny%
$\hbox to 0mm{\hspace*{-0.5mm}$\leftarrow$\hss}%
\vcenter{\vrule depth 0.1mm height 0.1mm width \the\marginparwidth}%
\hbox to
0mm{\hss$\rightarrow$\hspace*{-0.5mm}}$\\\relax\raggedright #1}}}

\begin{document}
\title{Multifractality of Semiclassical Measures on Star Graphs}
\author[M.~Nietschmann]{Marius Nietschmann}

\address{M.~Nietschmann, Sorbonne Université and Université Paris Cité, CNRS, IMJ-PRG, F-75005 Paris, France}

\email{marius.nietschmann@imj-prg.fr}
\date{\today}

\begin{abstract}
	We study eigenfunctions of quantum star graphs in the large edge number limit through the edge-mass distributions associated with their semiclassical measures. For generic edge lengths, we show that these distributions can realize every admissible multifractal scaling law along suitable subsequences of eigenvalues. We also prove a constructive result for quasi-equilateral star graphs. Starting from prescribed probability measures, we construct graphs and locate eigenvalues inside spectral clusters whose eigenfunctions reproduce the same scaling behavior. These results show that quantum star graphs form an explicit model realizing the full range of admissible multifractal behavior between localization and equidistribution.	
\end{abstract}

\maketitle

\section*{Setting the Stage}

The spatial distribution of eigenfunctions is one of the central questions in quantum chaos. A guiding principle is that spectral and eigenfunction statistics should reflect the nature of the underlying classical dynamics. For generic integrable systems, the Berry--Tabor conjecture predicts Poissonian spectral statistics, while for chaotic systems the Bohigas--Giannoni--Schmit conjecture predicts the universal statistics of random matrix theory~\cite{BT77,BGS84}. On manifolds with ergodic geodesic flow, a density-one subsequence of eigenfunctions becomes equidistributed~\cite{S74,CdV85,Z87}. In negatively curved or Anosov manifolds, this delocalization is known to satisfy stronger restrictions, such as entropy lower bounds for semiclassical measures~\cite{AN07,A08} and lower bounds on the mass of eigenfunctions on open sets~\cite{DJ18,DJN22}.

At the opposite end of the spectrum lies localization. For instance, the round sphere admits spherical harmonics that concentrate on the equator and saturate Sogge's $L^p$-bounds~\cite{S88}. The dichotomy between localization and delocalization also appears in more complex systems as the Anderson model, introduced to describe absence of diffusion in disordered media~\cite{A58}. In the localized regime, a rich rigorous theory proves pure point spectrum, exponentially localized eigenfunctions, dynamical localization, and Poissonian spectral statistics~\cite{FS83,FMSS85,M96,GK14,AW15}. By contrast, the delocalized regime of the Anderson model, and especially the transition between localized and extended states, remains much less understood. The physics prediction is that at the mobility edge eigenfunctions are neither localized on a small region nor uniformly spread over the whole system. Instead, they are expected to exhibit multifractal behavior: their mass is distributed over many sites, but in a highly inhomogeneous way governed by nontrivial scaling laws~\cite{CP86,SG91,KLAA94,EM08}.

A second class of intermediate systems are pseudo-integrable systems. These are neither completely integrable nor uniformly chaotic. Classical examples include rational polygonal billiards and related singular billiards~\cite{RB81}. Their spectral statistics are expected to be intermediate between Poisson and random-matrix statistics, and their eigenfunctions are natural candidates for multifractal behavior~\cite{BGS99}. Rigorous results in this direction are difficult, since such systems are too singular for many standard tools of quantum chaos and too structured for probabilistic methods. In recent works, Keating and Ueberschär constructed multifractal eigenfunctions in singular billiards~\cite{KU22} and in quantum star graphs~\cite{KU}, providing a rigorous model for this intermediate regime. The present paper continues and substantially generalizes the results in~\cite{KU}.

We focus on quantum star graphs because they provide an explicit yet nontrivial model of intermediate behavior. Quantum graphs are well-established models in quantum chaos and wave propagation~\cite{KS99,GS06,BK13}. Star graphs, in particular, have long been used as tractable models for spectral and eigenfunction statistics beyond the purely integrable or chaotic paradigms. Their two-point correlations are related to those of \v{S}eba billiards~\cite{BK99,BBK01}, their eigenfunction value distributions can be analyzed explicitly~\cite{KMW03}, and they are known not to satisfy quantum ergodicity in the large-graph limit~\cite{BKW04}. Thus, although star graphs are much simpler than the Anderson model or polygonal billiards, they retain a key feature relevant to the present question: their eigenfunctions are explicit enough to study mass distribution directly, while still allowing behavior which is neither localized nor equidistributed.

The limit considered here is the large edge number limit. This is natural for the study of multifractality, since multifractality is a scaling phenomenon and therefore requires a system size that grows compared to the wavelength. On a fixed compact graph, one may study semiclassical measures, but there is no increasing spatial scale on which to measure nontrivial distribution laws. Letting the number of edges grow provides such a scale. In this limit, the edge-mass distribution of an eigenfunction becomes a probability measure on a finite set whose size tends to infinity. This is the setting in which localization, equidistribution, and intermediate multifractal behavior can be compared directly.

Throughout the paper, we study scaling laws through mass exponents. They are abstractly defined for a sequence of probability measures $\mu_n$ on $\{1,\dots,n\}$ via
	\[ \tau(q) = \limsup_{n\to\infty} \frac{\log\sum_i \mu_n(i)^q}{\log n}, 
		\qquad q\geq1. \] 
In the application to star graphs, the measures will be given by the amplitude squared of eigenfunctions on each edge. For eigenfunctions that are perfectly equidistributed or supported on a bounded number of edges, one obtains an affine mass exponent. Accordingly, we speak of multifractality when $\tau$ is non-affine, reflecting the nontrivial scaling described above. 

The results of this paper show that quantum star graphs are completely flexible at the level of these discrete scaling laws. We first characterize the possible scaling laws for probability measures on growing finite sets. We then show that every such law is realized by eigenfunctions of generic quantum star graphs along a suitable subsequence as the number of edges tends to infinity. Since this subsequence is obtained through an ergodicity argument and the corresponding eigenvalues are not explicitly located, we also study quasi-equilateral star graphs. Their spectrum forms clusters, and we show how to construct graphs and choose eigenvalues inside prescribed clusters so that the resulting eigenfunctions reproduce any admissible scaling behavior. Thus, the examples of Keating and Ueberschär are part of a larger picture: quantum star graphs can realize the full range of admissible multifractal behavior.

\subsection*{Acknowledgments}

The author would like to thank his supervisors, Maxime Ingremeau and Henrik Ueberschär, for their guidance and advice throughout the course of this work.

\section{Set-Up and Main Results} \label{sec: set-up}

\subsection{Star Graphs}

Let $X^n$ be a (combinatorial) star graph of $n$ edges. Whenever the number of edges will not be of importance in the discussion, we will drop the index~$n$. Together with a choice $L=(L_1,\dots,L_n)\in\R_+^n$, where $\R_+=(0,\infty)$, the pair $(X,L)$ defines a \emph{metric graph} obtained by gluing together $n$ intervals of lengths given by $L$ at the central vertex that we denote by $v$ using the combinatorics of $X$ (confer Figure~\ref{fig: star graph}). Each exterior vertex of $X$ is identified with $0$ of each interval $[0,L_i]$. We can equip the metric graph $(X,L)$ with the Laplace operator $\Delta_L$ acting on functions $\ph=(\ph_i)_{1\leq i\leq n}$ in the Sobolev space $H^2(X,L) = \bigoplus_i H^2([0,L_i])$ satisfying the following conditions: 
\begin{enumerate}[label = \emph{\roman*})]
	\item $\ph$ is continuous at the central vertex: 
	\begin{equation}\label{eq: ph cont}
		\forall i,j\in\{1,\dots,n\} : \ph_i(L_i) = \ph_j(L_j). 
	\end{equation}
	\item $\ph$ vanishes on the exterior vertices (Dirichlet boundary): 
	\begin{equation}\label{eq: ph Dirichlet}
		\forall i\in\{1,\dots,n\} : \ph_i(0) = 0. 
	\end{equation}
	\item $\ph$ satisfies the $\delta$-type condition: 
	\begin{equation}\label{eq: ph delta type}
		\sum_{i=1}^n \ph_i'(L_i) = \alpha \, \ph(v), \qquad \alpha\in\R. 
	\end{equation}
\end{enumerate}
The Laplacian is then given by $-d^2/dx^2$ on each edge. A metric graph $(X,L)$, together with the previously defined $\Delta_L$, is called a \emph{quantum star graph}. The conditions~(\ref{eq: ph cont})-(\ref{eq: ph delta type}) guarantee that the Laplacian is a self-adjoint operator. 

\begin{figure}[h]
\begin{tikzpicture}[scale=.5]
	\foreach \angle in {-20,15,62,105}{
		\draw[gray,very thick] (0,0) -- (\angle-100:.5*\angle+3.4cm);
		\draw[darkgray,fill] (\angle-100:.5*\angle+3.4cm) circle (4pt);
	}
	\draw[gray,very thick] (0,0) -- (70:2.5cm);
	\draw[darkgray,fill] (70:2.5cm) circle (4pt);
	\draw[gray,very thick] (0,0) -- (195:3.8cm);
	\draw[darkgray,fill] (195:3.8cm) circle (4pt);
	\draw[gray,very thick] (0,0) -- (115:2.5cm);
	\draw[darkgray,fill] (115:2.5cm) circle (4pt);
	\draw[gray,very thick] (0,0) -- (160:4cm);
	\draw[darkgray,fill] (160:4cm) circle (4pt);
	\draw[darkgray,fill] (0,0) circle (4pt) node[label={[label distance=-3pt]30:$v$}]{};
\end{tikzpicture}
\caption{A quantum star graph with $n=8$ edges.}
\label{fig: star graph}
\end{figure}

\subsection{Spectrum and Eigenfunctions}

Since $\Delta_L$ has compact resolvent, its spectrum is discrete and accumulates at infinity~\cite[Chapter 3]{BK13}. There exists an orthogonal basis of $L^2(X,L)$ consisting of eigenfunctions of $\Delta_L$ that we will denote $\ph^{\langle k \rangle}$ with associated eigenvalue $k^2$. Since on every edge the restriction $\ph^{\langle k\rangle}_i$ is an eigenfunction on an interval, there exist constants $\alpha_i,\beta_i$ such that $\ph^{\langle k\rangle}_i(x) = \alpha_i \sin kx + \beta_i \cos kx$. The Dirichlet condition~(\ref{eq: ph Dirichlet}) yields $\beta_i=0$. Hence,
\begin{equation} \label{eq: ph is sin}
	\ph^{\langle k\rangle}_i(x) = \alpha_i \sin kx. 
\end{equation}
We will see in Section~\ref{sec: spectral determinant} that if $L$ has rationally independent entries, then $\ph^{\langle k\rangle}$ does not vanish at the central vertex. Since star graphs are trees, this implies that $k^2\in\sigma(\Delta_L)$ is a simple eigenvalue \cite[Corollary 3.1.9]{BK13} and hence $\ph^{\langle k\rangle}$ is unique (up to a multiplicative constant). 

\subsection{Semiclassical Measures and Mass Measures} \label{subsec: mass measures}

A function $\ph$ in $L^2(X,L)$ induces a measure on the quantum graph via $|\ph|^2 \mathrm{vol}_L$. Every subsequence of eigenfunctions $\ph^{\langle k_m\rangle}$ such that the limit 
\begin{equation} 
	\lim_{m\to\infty} |\ph^{\langle k_m\rangle}|^2 \mathrm{vol}_L (f) 
		= \lim_{m\to\infty} 
			\int_X f\,|\ph^{\langle k_m\rangle}|^2 d\mathrm{vol}_L 
		=: \mu_\mathrm{sc}(f)
\end{equation}
exists for all continuous functions $f$ induces a measure $\mu_\mathrm{sc}$ on $(X,L)$ which is called \emph{semiclassical measure}. In other words, semiclassical measures are accumulation points of the sequence $|\ph^{\langle k\rangle}|^2 \mathrm{vol}_L$ of measures obtained from the eigenfunctions where the space of measures on $(X,L)$ is endowed with the weak*-topology. 

We furthermore introduce the discrete measures 
\begin{equation} \label{eq: mass measure}
	\mu_n^k (i) 
		:= \|\ph_i^{\langle k\rangle}\|_\infty^2 
		= |\alpha_i|^2  
\end{equation}
associated with an eigenfunction $\ph^{\langle k\rangle}$ that we will refer to as \emph{mass measures}. We will normalize eigenfunctions in such a way that mass measures are probability measures on $\{1,\dots,n\}$. 

Mass measures and semiclassical measures are closely related. Indeed, we will see in Section~\ref{sec: semiclassical measures} Lemma~\ref{lem:class sc measures} that if the edge lengths are rationally independent, the weak*-limit of $|\ph^{\langle k_m\rangle}|^2 \mathrm{vol}_L$ exists if and only if $\mu_n^{k_m}$ converges in the space of probability measures on $\{1,\dots,n\}$. We then have 
\begin{equation} \label{eq: relation mass and semiclassical measures}
	\mu_\mathrm{sc} 
		= \frac12 \sum_{i=1}^n \left(\lim_{m\to\infty} \mu_n^{k_m}(i)\right) \mathrm{Leb}_i. 
\end{equation}
Moreover, due to~\cite[Theorem 2.1]{CdV15}, every mass measure $\mu_n^k$ is a semiclassical measure on $(X,L)$. 

\subsection{Multifractality}

Our interest lies in studying the behavior of sequences of mass measures as $n$ tends to infinity. To that end, we will introduce the mass exponent that captures phenomena of localization and delocalization of the measures in question. We will identify the set probability measures on $\{1,\dots,n\}$ with 
\begin{equation} \label{eq: identification set of measures and conv}
	\mathrm{conv}\{e_1,\dots,e_n\}\subset\R^n 
\end{equation}
and write $\|\cdot\|_q$ for the $q$-norm. 

\begin{defn}[Mass exponent] \label{defn: mass exponent}
	Let $\mu_n$ be a sequence of probability measures on $\{1,\dots,n\}$. Its \emph{mass exponent} is defined as 
	\begin{equation}
		\tau(q) = \limsup_{n\to\infty} \frac{\log\|\mu_n\|_q^q}{\log n}, 
				\qquad q\geq1. 
	\end{equation}
	We denote by $\mathcal E$ the set of functions $\tau$ that are mass exponents of some sequence of discrete probability measures. 
\end{defn}

We remark that while the quotient $\log(\|\mu_n\|_q^q)/\log n$, in general, does not converge as $n\to\infty$, the limit superior always exists (because $\|\mu_n\|_q^q\leq1$). Therefore, any sequence of measures has a well-defined mass exponent. 

It is an easy computation that a point measure has mass exponent~$0$ and the uniform measure has mass exponent equal to~$1-\id$. The next easiest generalizing example that comes to mind might be uniform measures on subsets of $\{1,\dots,n\}$. Those give rise to affine functions as mass exponents. It is a natural question whether mass exponents can take on a more general form. This inspires the following definition. 

\begin{defn}[Multifractality] \label{defn: multifractality}
	A sequence of probability measures is called \emph{multifractal} if its mass exponent is a non-affine function. 
\end{defn}

We will see in Section~\ref{sec: properties of tau} that a mass exponent $\tau$ of a sequence of discrete probability measures has the following three properties: 
\begin{enumerate}[label=\alph*)]
	\item $0\geq\tau\geq1-\id$, 
	\item $\tau$ is convex, 
	\item $\frac{\tau(q)}q \geq \tau'(q)$ for all $q\geq1$ where $\tau'(q)$ denotes the right-sided derivative.  
\end{enumerate}
Actually, we will see in Section~\ref{sec: char mass exp} that these properties characterize mass exponents as states the following theorem. 

\begin{thm}[Characterization of mass exponents] \label{thm: characterization}
	There is a one-to-one correspondence between functions satisfying properties \emph{a)-c)} and mass exponents of sequences of discrete probability measures. 
\end{thm}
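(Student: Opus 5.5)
The proof splits into necessity (every mass exponent satisfies a)–c)) and sufficiency (every function satisfying a)–c) is the mass exponent of some sequence). Throughout, write $g_n(q)=\log\|\mu_n\|_q^q/\log n$, so that $\tau=\limsup_n g_n$.

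\textbf{Necessity.}
\begin{itemize}
\item[a)] Since $\mu_n$ is a probability vector, $\|\mu_n\|_q^q\le 1$. By H\"older (or Jensen), $\|\mu_n\|_q^q\ge n^{1-q}$. Dividing by $\log n$ and taking the $\limsup$ gives $1-q\le\tau(q)\le 0$.
\item[b)] The map $q\mapsto\log\sum_i\mu_n(i)^q$ is convex, since it is a log-sum-exp of functions linear in $q$. Hence each $g_n$ is convex. Set $f_N=\sup_{n\ge N}g_n$. Each $f_N$ is convex as a supremum of convex functions, and $f_N$ is pointwise non-increasing in $N$. A pointwise limit of convex functions is convex, so $\tau=\lim_N f_N$ is convex.
\item[c)] The map $q\mapsto\|\mu_n\|_q$ is non-increasing, so $g_n(q)/q$ is non-increasing in $q$. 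This monotonicity survives the $\limsup$, so $\tau(q)/q$ is non-increasing. Then $\tau(q+h)/(q+h)\le\tau(q)/q$ gives
\[
\tau(q+h)-\tau(q)\le h\,\tau(q)/q .
\]
Dividing by $h$ and letting $h\downarrow0$ yields $\tau'(q)\le\tau(q)/q$; the right derivative exists by convexity.
\end{itemize}

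\textbf{Sufficiency.} The building block is a measure that is uniform on $\lfloor n^{a}\rfloor$ atoms, each of mass $n^{-b}$. It has total mass about $n^{a-b}$ and contributes $n^{a-bq}$ to $\|\mu_n\|_q^q$. A finite superposition of such blocks with parameters $(a_j,b_j)$ therefore has mass exponent $\max_j(a_j-b_jq)$, up to errors of size $O(\log J/\log n)$, where $J$ is the number of blocks.

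Given $\tau$ satisfying a)–c), I take supporting lines
\[
\ell_{q_0}(q)=\tau(q_0)+\tau'(q_0)(q-q_0)
\]
at a countable dense set of points $q_0\ge1$. Write $\ell_{q_0}(q)=a-bq$ with $a=\tau(q_0)-q_0\tau'(q_0)$ and $b=-\tau'(q_0)$. The admissibility constraints on $(a,b)$ follow from a)–c):
\begin{itemize}
\item $a\ge0$: this is exactly property c).
\item $a\le b$: this says $\ell_{q_0}(1)\le0$. It holds because $\ell_{q_0}\le\tau$ by convexity, and $\tau(1)=0$ by a).
\item $b\le1$: by convexity $\tau'(q_0)\ge(\tau(q_0)-\tau(1))/(q_0-1)$, and the right-hand side is $\ge -1$ by a).
\end{itemize}
Hence $0\le a\le b\le 1$. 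The constraint $a\le b$ keeps each block's mass at most $1$, and $a\le 1$ keeps its number of atoms at most $n$. Convexity gives $\tau=\sup_j\ell_{q_j}$ pointwise on the open part, and on the boundary as well if the endpoint $q=1$ is included.

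\textbf{Construction at scale $n$.} For each $n$ I use the first $J(n)$ lines, with $J(n)\to\infty$ slowly enough that $\log J(n)=o(\log n)$. The measure $\mu_n$ consists of:
\begin{itemize}
\item blocks of $\lfloor n^{a_j}/(2J)\rfloor$ atoms of mass $n^{-b_j}/(2J)$, so that the blocks together use at most $n/2$ atoms and total mass $Z\le 1/2$;
\item the leftover mass $1-Z$, spread uniformly over the remaining at least $n/2$ atoms.
\end{itemize}
The leftover part contributes at most $C n^{1-q}$, and $1-q\le\tau(q)$, so it never raises the exponent above $\tau$. The blocks contribute $\max_{j\le J}(a_j-b_jq)+o(1)$. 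The main obstacle is bookkeeping: making the finite-$J$ approximation converge to $\tau$ at every fixed $q$, which it does by density of the $q_j$ and monotonicity in $J$, while keeping the rounding and the $1/J$ factors negligible in the exponent. This is what forces $J(n)$ to grow subexponentially in $\log n$. The degenerate cases $a_j=0$ (a single atom, giving an affine piece through the origin) and $b_j=a_j$ need the floor functions checked separately, but cause no real difficulty. Since the resulting limit actually exists, it equals $\tau$, which proves the correspondence.
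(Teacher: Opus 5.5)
Your proof is correct, but it takes a different route from the paper in two places.

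\textbf{Property c).} You observe that $q\mapsto\|\mu_n\|_q$ is non-increasing, so $\tau(q)/q=\limsup_n\log\|\mu_n\|_q/\log n$ is non-increasing. For convex $\tau$ this is equivalent to c). The paper instead introduces the exponent $C=\limsup_n\log M_n/\log n$ of the largest atom and proves $qC\le\tau(q)\le qC+1$. It then squeezes $\tau'(q)\le\lim_{p\to\infty}\tau(p)/p\le C\le\tau(q)/q$. Your argument is shorter. The paper's identifies $C$ as the asymptotic slope and gives the ``furthermore'' part of Proposition~\ref{prop: properties of tau} for free.

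\textbf{Sufficiency.} The paper is modular:
\begin{itemize}
\item tangent lines lie in $\mathcal A$, by the same computation as your $0\le a\le b\le 1$;
\item Dini's theorem gives $\mathcal T\subset\overline{\mathcal T}_{\mathrm{fin}}^{\,\mathrm{c.o.}}$;
\item finite maxima are realized exactly by blockwise uniform measures, with locally uniform convergence;
\item a diagonal extraction passes to the limit.
\end{itemize}
You instead fold the diagonalization into a single construction. You use $J(n)\to\infty$ blocks damped by $1/(2J)$, plus a uniform filler whose contribution $\lesssim n^{1-q}\le n^{\tau(q)}$ is harmless. This yields an honest limit $\log\|\mu_n\|_q^q/\log n\to\tau(q)$ along all $n$, with no subsequence extraction. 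What the paper's route buys in return is that its blocks have even sizes, so the measures can be taken in $\mathcal M$. That is exactly what is reused, via Corollary~\ref{cor: E attainable via M}, for Theorem~\ref{thm: graphs dynamical argument}. Your construction would need a parity adjustment for that, though it is irrelevant to the present statement.

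\textbf{A repair to your construction.} You flag the floor functions as something to check, but they actually need fixing. With $\lfloor n^{a_j}/(2J)\rfloor$ atoms, every block with $a_j=0$ is empty for all $n$, and such lines can be indispensable. For $\tau(q)=\max\{1-q,-\gamma q\}$, every tangent at $q>1/(1-\gamma)$ has $a=0$. The construction as written would then produce the exponent $1-q$ there instead of $-\gamma q$.

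Using $\lceil n^{a_j}/(2J)\rceil$ atoms fixes this:
\begin{itemize}
\item the blocks occupy at most $n/2+J$ atoms and carry total mass at most $\tfrac1{4J}+\tfrac12\le\tfrac34$;
\item block $j$ contributes between $(2J)^{-1-q}n^{a_j-b_jq}$ and $2n^{a_j-b_jq}$ to $\|\mu_n\|_q^q$;
\item all the resulting powers of $J$ disappear in the exponent because $\log J(n)=o(\log n)$.
\end{itemize}
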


This inverse problem asking which scaling laws can occur at all has also been addressed in the continuous setting~\cite{B15}. In the case of positive finite compactly supported Borel measures on $\R^d$, Barral characterized the arising mass exponents. The present result is close in spirit to Barral's theorem but the discrete setting imposes an additional constraint, reflecting the possible presence of large atoms. 

\subsection{Main Results}

We now state our main results. Recall that the parameter $\alpha$ was introduced in~(\ref{eq: ph delta type}). 

\begin{thm} \label{thm: graphs dynamical argument}
	Let $\alpha=0$ and $(X^n,L^n)_{n\in\N}$ be a sequence of star graphs such that the edge length vectors $L^n\in\R_+^n$ have rationally independent entries. Then, for any function $\tau$ satisfying the properties \emph{a)-c)} of mass exponents, there exists a sequence of mass measures whose mass exponent is $\tau$. 
\end{thm}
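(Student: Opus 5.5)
The plan is to reduce the statement to a question about a hypersurface in the torus, using Kronecker's theorem and then Theorem~\ref{thm: characterization}.

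\textbf{Reduction to the torus.} By~(\ref{eq: ph is sin}), an eigenfunction has $\ph_i=\alpha_i\sin kx$. Continuity~(\ref{eq: ph cont}) gives $\alpha_i\sin(kL_i)=\ph(v)$. With $\alpha=0$, condition~(\ref{eq: ph delta type}) reads $\sum_i\alpha_i k\cos(kL_i)=0$. Hence $k^2$ is an eigenvalue if and only if $\theta=kL \bmod \pi$ lies on
\[ \Sigma_n=\Big\{\theta\in(\R/\pi\Z)^n:\ \textstyle\sum_i\cot\theta_i=0\Big\}, \]
and then
\[ \mu_n^k(i)=\frac{\sin^{-2}(kL_i)}{\sum_j\sin^{-2}(kL_j)}. \]
Since the entries of $L^n$ are rationally independent, the orbit $k\mapsto kL^n \bmod \pi$ is dense in the torus. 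Moreover, the gradient of $\sum\cot\theta_i$ is $(-\sin^{-2}\theta_i)_i$, and its pairing with $L^n$ is strictly negative. So the flow crosses $\Sigma_n$ transversally everywhere. Consequently, for every $\theta^*\in\Sigma_n$ (with no coordinate equal to $0$) and every $\ep>0$, there are eigenvalues $k$ with $kL^n$ within $\ep$ of $\theta^*$ modulo $\pi$. By continuity, the corresponding mass measure $\mu_n^k$ is within any prescribed tolerance of the measure $\nu(\theta^*)\propto(\sin^{-2}\theta^*_i)_i$. Choosing the tolerance so that $\mu_n^k(i)\in[\tfrac12\nu(i),2\nu(i)]$ for every $i$ changes $\|\cdot\|_q^q$ by at most a factor $4^q$, which does not affect $\tau$. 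It therefore suffices to produce, for each $n$, a point $\theta^{(n)}\in\Sigma_n$ such that $\nu(\theta^{(n)})$ is comparable, up to constant factors, to $\mu_n$. Here $(\mu_n)$ is a sequence with mass exponent $\tau$, which exists by Theorem~\ref{thm: characterization}. A harmless preliminary step is to replace $\mu_n$ by $(1-n^{-M})\mu_n+n^{-M}\cdot\text{uniform}$. This changes neither $\tau$ nor comparability, and makes all weights positive.

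\textbf{Realizing a prescribed vector on $\Sigma_n$.} Given $p=\mu_n$ with all $p_i>0$, fix a constant $c\le\min_i p_i$. Put $a_i=\sqrt{p_i/c-1}\ge0$, and seek $\theta_i$ with $|\cot\theta_i|=a_i$, that is $\sin^{-2}\theta_i=p_i/c$, and signs $\ep_i\in\{\pm1\}$ with $\sum_i\ep_i a_i=0$. Exact cancellation is generally impossible, so we allow each $a_i$ to be modified to some $a_i'$ with $1+a_i'^2\in[\tfrac14(1+a_i^2),4(1+a_i^2)]$; this keeps the comparability.

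The sign choice goes as follows. Order the $a_i$ decreasingly and assign signs greedily, so that all partial sums have absolute value at most $a_{\max}$. The final residual $s$ is then absorbed into one coordinate. Atoms with $1+a_i^2\lesssim 1$, i.e.\ $p_i\lesssim c$, can absorb residuals of size $O(1)$ at no cost. Larger residuals are absorbed by pairing the two largest entries $a_{(1)}\ge a_{(2)}$ with opposite signs and adjusting $a_{(2)}$ to $a_{(2)}'=a_{(1)}+s$.

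\textbf{Main obstacle.} The delicate step is this final cancellation. One must guarantee that the coordinate adjusted to kill the residual keeps its mass within a constant factor of the original, especially when the residual is comparable to the largest atom, where the mass exponent is most sensitive. My first attempt would exploit the freedom in $c$: as $c$ varies, all $a_i$ rescale like $c^{-1/2}$, and the relative residual after greedy signing changes. If that fails, I would use the freedom in the target measure itself. Theorem~\ref{thm: characterization} only fixes $\tau$, so one may modify $\mu_n$, for instance by splitting the largest atom into two nearly equal atoms or duplicating an atom, so that an exactly cancelling sign pattern exists without altering the mass exponent. Once this is settled, combining the approximating eigenvalues $k_n$ for each $n$ yields a sequence of mass measures $\mu_n^{k_n}$ with mass exponent exactly $\tau$.
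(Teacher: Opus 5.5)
Your reduction is correct and follows the same skeleton as the paper's proof: density of $k\mapsto kL^n \bmod \pi$, transversality of the flow to $\Sigma_n$, and invariance of $\tau$ under coordinatewise comparability. The gap is the step you call the main obstacle. It is a genuine obstruction, not a technicality. For an arbitrary target $\mu_n$ with exponent $\tau$ there may be no point of $\Sigma_n$ whose mass measure is comparable to it, so varying $c$ cannot help. Indeed, take $\theta\in\Sigma_n$, $y_i=\cot\theta_i$ and $S=\sum_j(1+y_j^2)\ge n$. Then $\sum_i y_i=0$ gives $|y_1|\le\sum_{i\ge2}|y_i|$, that is,
\[ \sqrt{\nu_1(\theta)-1/S}\le\sum_{i\ge2}\sqrt{\nu_i(\theta)}, \qquad \nu_i(\theta)=\frac{1+y_i^2}{S}. \]
Let $\mu_n(1)$ be the largest atom. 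If $\sqrt{\mu_n(1)}/\sum_{i\ge2}\sqrt{\mu_n(i)}\to\infty$, then $\mu_n(1)\to1\gg 1/S$, and the display shows no mass measure can be uniformly comparable to $\mu_n$. This already happens for your mixed point mass, the simplest representative of $\tau\equiv0$: there $\sum_{i\ge2}\sqrt{\mu_n(i)}\approx n^{(1-M)/2}\to0$ for $M>1$. Likewise, your absorption rule $a'_{(2)}=a_{(1)}+s$ keeps $a'_{(2)}\asymp a_{(2)}$ only if $a_{(1)}\asymp a_{(2)}$ and $|s|\lesssim a_{(2)}$.

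The missing ingredient is your second remedy, which is exactly the paper's route: choose the target so that exact cancellation holds by construction. Lemma~\ref{lem: T_fin subset E}, Lemma~\ref{lem: T_fin^pt subset E} and Corollary~\ref{cor: E attainable via M} show that every $\tau\in\mathcal P$ is the exponent of measures in $\mathcal M$. These admit signs with $\sum_i\ep_i\sqrt{\mu_n(i)}=0$ (even block sizes, alternating signs). Lemma~\ref{lem: approximate G via M} then sets $\cot x_i=\ep_i\sqrt{\mu_n(i)}/\gamma$. This point lies exactly on $Z$ and gives $G_i(x)=(\gamma^2+\mu_n(i))/(n\gamma^2+1)$, within $1/(2n)$ of $\mu_n$ in $\ell^1$. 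Note that this drops your ``$-1$'': demanding $\nu(\theta)=p$ exactly would require cancellation among the $\sqrt{p_i/c-1}$ rather than among the $\sqrt{p_i}$.

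You could also close the gap inside your own framework. After sorting, replace $\mu_n$ by the normalization $\tilde\mu_n$ of $(\mu_n(1)/2,\mu_n(1)/2,\mu_n(2),\dots,\mu_n(n-1))$. This changes $\|\cdot\|_q^q$ by a factor between $2^{-q}$ and $2^q$. Now the largest $\sqrt{\tilde\mu_n(i)}$ is at most the sum of the others, so greedy signing gives side sums with ratio $r\in[1,3]$. Multiply the lighter side by $r$ and set $\cot\theta_i=\pm r_i\sqrt{\tilde\mu_n(i)}/\gamma$ with $r_i\in\{1,r\}$ and $\gamma^2\le\min(\min_i\tilde\mu_n(i),1/n)$. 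This gives $\nu(\theta)\asymp\tilde\mu_n$ coordinatewise with absolute constants. As written, though, the proposal leaves this step unresolved.
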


The theorem states that one can obtain any possible mass exponent via mass measures of quantum star graphs. 

The proof of Theorem~\ref{thm: graphs dynamical argument} uses a dynamical argument to extract a sequence of eigenvalues $k_n^2$ whose associated mass measures have mass exponent $\tau$. In order to be able to give the sequence of eigenvalues more explicitly, we will restrict to a class of star graphs that we call \emph{quasi-equilateral} whose edge lengths 
\begin{equation}
	L_i = 1+\ep\ell_i 
\end{equation}
depend on a parameter $\ep$ and vary on a lower scale. We assume these lengths to be increasing in~$i$. With~(\ref{eq: ph cont}) and~(\ref{eq: ph is sin}), the $\delta$-type condition~(\ref{eq: ph delta type}) is equivalent to 
\begin{equation} \label{eq: spectral equation cot alpha}
	\sum_{i=1}^n \cot kL_i = \frac\alpha k. 
\end{equation}
As a function of $k$, the sum of cotangents has poles at 
\begin{equation} \label{eq: poles of spectral determinant}
	p_{m,i} := \frac{m\pi}{L_i} \in \frac\pi{L_i} \Z. 
\end{equation}
By the mean value theorem, between two consecutive poles, there is exactly one zero. In the case of quasi-equilateral star graphs, the localization of eigenvalues can be made more precise as was found by Keating--Ueberschär. We restate their observations for the reader's convenience and refer to~\cite{KU} for more details. For a pole, we have 
\begin{equation}
	p_{m,i} = m\pi + O(m\ep\ell_i) = m\pi + O(m\ep\|\ell\|). 
\end{equation}
Then, the poles form clusters $\{p_{m,1},\dots,p_{m,N}\}$ labeled by $m\in\Z$ with 
\begin{equation}
	|p_{m,i} - p_{m,i'}| \asymp m\ep |\ell_i - \ell_{i'}|. 
\end{equation}
In the regime where 
\begin{equation} \label{eq: clustering regime}
	m\ep\|\ell\|\to0, 
\end{equation}
which we refer to as the \emph{clustering regime}, these clusters separate. More precisely, we have that $|p_{m,i} - p_{m',i'}| \gtrsim 1$ for $m\neq m'$. We note that $m$, $\ep$ and $\ell$ in general depend on $n$. The eigenvalues of the corresponding graph form clusters interlacing with the poles. This interlacing property is illustrated in Figure~\ref{fig: spectral determinant}. For an eigenvalue $k^2$ of $\Delta_L$ within a cluster, there exist unique $m\in\Z$ and $i_0\in\{1,\dots,n-1\}$ such that 
\begin{equation}
	p_{m,i_0+1} < k < p_{m,i_0}.
\end{equation}
We say that a sequence of eigenvalues $k_m^2$ satisfies the \emph{spectral repulsion condition} at position $i_0$ if there exists some $\delta\in(0,\frac12]$ such that for all $m$ 
\begin{equation} \label{eq: spectral repulsion}
	\frac{\min_{i=i_0,i_0+1} |k_m - p_{m,i}|}{|p_{m,i_0} - p_{m,i_0+1}|} 
		\geq \delta. 
\end{equation}

\begin{figure}
\begin{tikzpicture}
\begin{axis}[
    domain=0.5:4.2,
    samples=2000,
    ymin=-0.5, ymax=1,
    xmin=0.5, xmax=4.2,
    xtick=\empty,
    xticklabels=\empty,
    axis x line=middle,
    axis y line=none,
    enlargelimits=false,
    unbounded coords=jump,
]

\addplot[
    blue,
    thick,
]
{ 
  (abs(cot(deg(1.8*x))) > 100 ? nan : 0.1*cot(deg(1.8*x))) +
  (abs(cot(deg(1.9*x))) > 100 ? nan : 0.1*cot(deg(1.9*x))) +
  (abs(cot(deg(2*x)))   > 100 ? nan : 0.1*cot(deg(2*x))) +
  (abs(cot(deg(2.1*x))) > 100 ? nan : 0.1*cot(deg(2.1*x))) +
  (abs(cot(deg(2.2*x))) > 100 ? nan : 0.1*cot(deg(2.2*x)))
};

\foreach \a in {1.8,1.9,2,2.1,2.2} {
  \foreach \k in {1,...,10} {
    \pgfmathsetmacro{\xpole}{\k*pi/\a}
    \ifdim \xpole pt > 0.5pt
      \ifdim \xpole pt < 4.2pt
        \addplot[dashed, gray] coordinates {(\xpole,-0.5) (\xpole,1)};
      \fi
    \fi
  }
}

\addplot[
    only marks,
    mark=*,
    red,
    mark size=1.5pt,
]
coordinates {
    (0.785,0)
    (1.451,0)
    (1.529,0)
    (1.614,0)
    (1.710,0)
    (2.356,0)
    (2.905,0)
    (3.059,0)
    (3.227,0)
    (3.416,0)
    (3.92,0)
};

\end{axis}
\end{tikzpicture}
\caption{Sum of cotangents in~(\ref{eq: spectral equation cot alpha}) with zeros $k$ shown in red}
\label{fig: spectral determinant}
\end{figure}

For real-valued functions $f$ and $g$ depending in general on $n$, we write $f\lesssim g$ if there exists a positive constant $C>0$ independent of $n$ such that $f \leq C g$. If both $f\lesssim g$ and $g\lesssim f$ hold, we write $f\asymp g$. 

\begin{thm} \label{thm: quasi-equilateral stars}
	Let $\nu_n$ be a sequence of probability measures on $\{1,\dots,n\}$ wich satisfy $\nu_n(i)>\nu_n(i+1)$ for all $i$ and $i_0=i_0(n)\in\{1,\dots,n-1\}$. 
	We define 
	\begin{equation}
		L_i := 1+\ep\ell_i 
				\qquad\text{with}\qquad 
			\ell_i := \begin{cases}
					- \frac1{\sqrt{\nu_n(i)}}, 		& i<i_0, \\ 
					0, 								& i=i_0, \\ 
					\frac1{\sqrt{\nu_n(1)}}, 		& i=i_0+1 \\ 
					\frac1{\sqrt{\nu_n(1)}} 
						+ \frac1{\sqrt{\nu_n(i)}}, 	& i>i_0+1. 
				\end{cases} 
	\end{equation}
	Let $\alpha\in\R$ be such that the spectral repulsion condition~(\ref{eq: spectral repulsion}) is satisfied at $i_0$ for a sequence of eigenvalues $k^2 = k^2(n)$ such that $k(n)\to\infty$ as $n\to\infty$. Then, in the clustering regime~(\ref{eq: clustering regime}) we have that $\mu_n^k(i)\asymp\nu_n(i)$ for $i\neq i_0,i_0+1$. In particular, as $n\to\infty$, the measures~$\nu_n$ and the mass measures~$\mu_n^k$ have the same mass exponent.  
\end{thm}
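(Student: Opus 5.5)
Since the eigenfunction is $\ph_i(x)=\alpha_i\sin kx$ and continuity at the central vertex forces $\alpha_i\sin kL_i = \ph(v)$ for all $i$, the mass measure is explicit:
\[
	\mu_n^k(i) = \frac{|\alpha_i|^2}{\sum_j|\alpha_j|^2}
		= \frac{\sin^{-2}(kL_i)}{\sum_j \sin^{-2}(kL_j)}.
\]
So the whole task is to estimate $|\sin kL_i|$ for $i\neq i_0,i_0+1$ and show that $\sin^{-2}(kL_i)\asymp C_n\,\nu_n(i)$ with a common factor $C_n$. We must also control the two exceptional edges so that they do not dominate the normalization, and check that they do not change the mass exponent.

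\textbf{Step 1: distance to poles.} Write $k$ in cluster $m$, so that $p_{m,i_0+1}<k<p_{m,i_0}$. Since $\sin kL_i = \pm\sin(L_i(k-p_{m,i}))$ and in the clustering regime $|k-p_{m,i}|\lesssim m\ep\|\ell\|\to0$, we get $|\sin kL_i|\asymp|k-p_{m,i}|$. Moreover $p_{m,i}=m\pi/(1+\ep\ell_i)=m\pi(1-\ep\ell_i+O(\ep^2\ell_i^2))$, so $p_{m,i}-p_{m,i'}\asymp m\pi\ep(\ell_{i'}-\ell_i)$, with relative error $o(1)$ because $\ep\|\ell\|\to0$. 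The point $k$ lies between $p_{m,i_0+1}$ and $p_{m,i_0}$, which correspond to $\ell=\ell_{i_0+1}=\nu_n(1)^{-1/2}$ and $\ell=\ell_{i_0}=0$. By the spectral repulsion condition, $k$ corresponds to an effective parameter $\ell_*\in[\delta\,\nu_n(1)^{-1/2},(1-\delta)\nu_n(1)^{-1/2}]$. Hence $|k-p_{m,i}|\asymp m\ep|\ell_i-\ell_*|$.

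\textbf{Step 2: comparison with $\nu_n$.} For $i<i_0$ we have $|\ell_i-\ell_*| = \nu_n(i)^{-1/2}+\ell_*$, with $0<\ell_*<\nu_n(1)^{-1/2}\le\nu_n(i)^{-1/2}$, so this quantity is $\asymp\nu_n(i)^{-1/2}$. For $i>i_0+1$ we have $|\ell_i-\ell_*| = \nu_n(i)^{-1/2}+(\nu_n(1)^{-1/2}-\ell_*)$, which again is $\asymp \nu_n(i)^{-1/2}$; this uses monotonicity, since $\nu_n(i)\le\nu_n(1)$. For $i\in\{i_0,i_0+1\}$, repulsion gives $|\ell_i-\ell_*|\asymp\nu_n(1)^{-1/2}$. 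Thus in all cases $\sin^{-2}(kL_i)\asymp(m\ep)^{-2}\nu_n(i)$, where for the two exceptional indices we substitute $\nu_n(1)$. The constants depend only on $\delta$ and $\pi$.

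\textbf{Step 3: normalization and mass exponent.} Summing the bounds from Step 2 gives $\sum_j\sin^{-2}(kL_j)\asymp(m\ep)^{-2}\bigl(1-\nu_n(i_0)-\nu_n(i_0+1)+2\nu_n(1)\bigr)\asymp(m\ep)^{-2}$, since this bracket lies in $[1,3]$. This yields $\mu_n^k(i)\asymp\nu_n(i)$ for $i\ne i_0,i_0+1$, and $\mu_n^k(i)\asymp\nu_n(1)$ at the two exceptional sites.

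For the mass exponent, note that $\|\nu_n\|_q^q\ge\nu_n(1)^q$, and the two exceptional terms contribute at most $2C\nu_n(1)^q$. Therefore $\|\mu_n^k\|_q^q\asymp\|\nu_n\|_q^q$ with constants $C^q$ independent of $n$. Dividing the logarithms by $\log n$ kills these constants, so the limsups coincide.

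\textbf{Main obstacle.} The delicate point is Step 1 and the start of Step 2. We must justify uniformly in $i$ that the linearization $|\sin kL_i|\asymp m\ep|\ell_i-\ell_*|$ holds, with error terms $O(m\ep^2\ell_i^2)$ that are small relative to $m\ep|\ell_i-\ell_*|$. We also need to translate the repulsion condition, which is stated in $k$-space, into the bound on $\ell_*$. I would handle both through the exact formula $p_{m,i}-p_{m,i'}=m\pi\ep(\ell_{i'}-\ell_i)/(L_iL_{i'})$, using $L_i=1+o(1)$ uniformly in the clustering regime. Note that $k(n)\to\infty$ and $\alpha$ play no role beyond guaranteeing that $k$ is a genuine eigenvalue inside the cluster.
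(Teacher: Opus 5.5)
Your proof is correct and follows essentially the paper's route: $\mu_n^k(i)\propto\sin^{-2}(kL_i)\asymp|k-p_{m,i}|^{-2}$, pole distances linearized as $m\ep|\ell_i-\ell_{i'}|$, monotonicity of $\nu_n$ for the non-exceptional edges, and spectral repulsion for $i_0,i_0+1$. The differences are cosmetic. The paper sandwiches $|k-p_{m,i}|$ between the distances from $p_{m,i}$ to $p_{m,i_0}$ and to $p_{m,i_0+1}$ instead of introducing $\ell_*$, and it invokes its stability lemma with $I=\{i_0,i_0+1\}$ where you compare the $q$-moments directly, which also sidesteps the fact that this set moves with $n$.
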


The result should be compared to~\cite{KU}. It can be viewed as an extension of Corollary~2.2 therein, which, for $\ell_i = i^{(1-\gamma)/2}$ with $\gamma \in (0,1)$, determines the mass exponent 
\begin{equation}
	\tau(q) = \max\{1-q,-\gamma q\}, 
\end{equation}
which is multifractal in the sense of Definition~\ref{defn: multifractality}, along a sequence of eigenvalues in a fixed finite position in each cluster, i.e., $i_0$ does not depend on $n$. Our contribution generalizes this framework, while drawing substantially on the methods introduced in~\cite{KU}. We emphasize that in Theorem~\ref{thm: quasi-equilateral stars} one can obtain any possible mass exponent on quasi-equilateral star graphs. 

We comment on the assumption of choosing the coupling parameter $\alpha$ such that the spectral repulsion condition~(\ref{eq: spectral repulsion}) is satisfied. We first write $\alpha = \alpha_m(k_m)$ as a function of both the solution $k_m$ and the cluster label $m$. We show that in the clustering regime~(\ref{eq: clustering regime}), the coupling parameter $\alpha$, up to first order, only depends on the position of $k_m$ relative to its neighboring poles. More precisely, we show that up to first order $\alpha$ depends on $t\in(0,1)$ such that 
\begin{equation}
	k_m = k_m(t) = (1-t) p_{m,i_0} + t p_{m,i_0+1} 
		= m\pi \left(\frac{1-t}{L_{i_0}} + \frac t{L_{i_0+1}}\right). 
\end{equation}
For $m\ep\|\ell\|$ small, we Taylor expand the cotangent functions in (\ref{eq: spectral equation cot alpha}) and obtain 
\begin{equation}
	\alpha_m(k_m(t)) \asymp \sum_{i=1}^n \frac{k_m(t)}{k_m(t) - p_{m,i}}. 
\end{equation}
The factors $m\pi$ simplify showing that the right hand side is a rational function in $t$ independent of $m$. In particular, given $\delta\in(0,\frac12]$, there exists an interval depending only on $\delta$ from which we can choose the coupling parameter $\alpha$ in order to satisfy the spectral repulsion condition~(\ref{eq: spectral repulsion}). 

The paper is organized as follows. In the next section, we show general properties of the associated scaling exponents. Section~\ref{sec: char mass exp} proves the converse construction for abstract probability measures and characterizes the admissible class. Section~\ref{sec: spectral determinant} recalls the spectral determinant and the torus-flow description of the spectrum. Section~\ref{sec: semiclassical measures} relates semiclassical measures to edge-mass distributions and proves the realization theorem for generic star graphs. Section~\ref{sec: graph from measure} treats quasi-equilateral star graphs and gives the explicit construction realizing prescribed scaling laws.

\section{Properties of Mass Exponents} \label{sec: properties of tau}

In this section, we show the properties of the mass exponent already mentioned in Section~\ref{sec: set-up}. 

\begin{prop}[Properties of $\tau$] \label{prop: properties of tau}
	Let $\tau : [1,\infty)\longto\R$ be the mass exponent of a sequence of probability measures $\mu_n$. Then, 
	\begin{enumerate}[label=\alph*\emph{)}]
		\item $0\geq\tau\geq1-\id$, 
		\item $\tau$ is convex, 
		\item $\frac{\tau(q)}q \geq \tau'(q)$ for all $q\geq1$ where $\tau'(q)$ denotes the right-sided derivative. 
	\end{enumerate}
	Furthermore, $\tau$ is decreasing, $\tau'\geq-1$, and if $\lim\limits_{q\to\infty} \frac{\tau(q)}q = 0$, then $\tau=0$. 
\end{prop}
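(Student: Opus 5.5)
Write $f_n(q) := \log\|\mu_n\|_q^q/\log n$, so that $\tau = \limsup_n f_n$, and derive every property of $\tau$ from a uniform-in-$n$ property of $f_n$.

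For a), since $\mu_n(i)\le 1$ and $q\ge1$, we have $\|\mu_n\|_q^q\le\sum_i\mu_n(i)=1$, hence $f_n\le0$. By Jensen (or Hölder), $\sum_i \mu_n(i)^q \ge n\,(1/n)^q = n^{1-q}$, so $f_n(q)\ge 1-q$. Both bounds pass to the $\limsup$.

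For b), the map $q\mapsto \log\sum_i \mu_n(i)^q$ is convex: it is a log-sum-exp of the linear functions $q\log\mu_n(i)$, and one can equally use Hölder's inequality with exponents $1/\theta$ and $1/(1-\theta)$. So each $f_n$ is convex. A pointwise $\limsup$ of convex functions is not convex in general, but here it is, because each $f_n(\theta q_1+(1-\theta)q_2)\le \theta f_n(q_1)+(1-\theta)f_n(q_2)$ and the $\limsup$ of the right side is at most $\theta\tau(q_1)+(1-\theta)\tau(q_2)$. Since $\tau$ is finite by a), it is convex, so one-sided derivatives exist.

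Monotonicity is immediate: $\mu_n(i)^{q}$ is nonincreasing in $q$, so each $f_n$ and hence $\tau$ is decreasing.

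The key structural fact for c) is that $g_n(q):=f_n(q)/q = \log\|\mu_n\|_q/\log n$ is nondecreasing in $q$. This holds because $\|\cdot\|_q$ is nonincreasing in $q$ on $\R^n$, and we divide by $\log n>0$. Passing to the $\limsup$, $q\mapsto\tau(q)/q$ is nondecreasing. For $h>0$ we then get $\tau(q+h)\ge \frac{q+h}{q}\tau(q)$, i.e. $\frac{\tau(q+h)-\tau(q)}{h}\ge \frac{\tau(q)}{q}$. Letting $h\downarrow0$ gives $\tau'(q)\ge \tau(q)/q$.

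That inequality is the reverse of c), so I must recheck the direction. Since $\tau\le0$, $\tau(q)/q$ increasing means it moves toward $0$. The correct deduction uses differentiating $(\tau/q)'=(q\tau'-\tau)/q^2\ge0$, which gives $\tau'\ge\tau/q$, again the reverse of c). Sanity check with $\tau=1-q$: here $\tau/q = 1/q-1$, which is decreasing, and $\tau'=-1 \le \tau/q$. So $\tau/q$ is not increasing; my claim about norm monotonicity was misapplied. Indeed $\|\mu\|_q$ is decreasing in $q$, so $\log\|\mu\|_q$ decreases and $g_n$ is decreasing. With this correction the argument works: $\tau/q$ is nonincreasing, and this is the main step requiring care, since the $\limsup$ of nonincreasing functions is nonincreasing. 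Then $(\tau(q+h)-\tau(q))/h\le \tau(q)/q$ for $h>0$, using $\tau(q+h)\le \frac{q+h}{q}\tau(q)$, and the limit $h\downarrow0$ gives c).

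For $\tau'\ge-1$: by a), $\tau(q+h)\ge 1-q-h$. A direct difference-quotient argument at $q$ is awkward because $\tau(q)$ may exceed $1-q$. Instead I use convexity together with a), $\tau\ge1-\mathrm{id}$: if $\tau'(q_0)<-1$ at some $q_0$, then convexity forces $\tau(q)\le\tau(q_0)+\tau'(q_0)(q-q_0)$ for $q\ge q_0$, and since $\tau'(q_0)<-1$ this eventually falls below $1-q$, a contradiction.

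Finally, suppose $\tau(q)/q\to0$. Since $\tau/q$ is nonincreasing and $\le0$, its limit is its infimum, so $\tau(q)/q\ge0$ for all $q$. Combined with $\tau\le 0$ this gives $\tau\equiv0$.
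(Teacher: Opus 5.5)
There is a genuine gap in the step $\tau'\ge-1$. A convex function lies \emph{above} its supporting lines, so convexity gives $\tau(q)\ge\tau(q_0)+\tau'(q_0)(q-q_0)$ for all $q\ge1$, not $\le$. To the right of $q_0$ this gives no upper bound on $\tau$ at all: $\tau'$ is nondecreasing, so $\tau$ may flatten out. The claimed contradiction with $\tau\ge1-\mathrm{id}$ therefore never appears. The missing input is $\tau(1)=0$ (since $\|\mu_n\|_1=1$), and the information has to be taken to the \emph{left} of $q_0$. Evaluating the same supporting line at $q=1$ gives $0\ge\tau(q_0)-\tau'(q_0)(q_0-1)$. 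Hence $\tau'(q_0)<-1$ with $q_0>1$ would force $\tau(q_0)<1-q_0$, contradicting a). For $q_0=1$ one has directly $\tau'(1)=\lim_{h\downarrow0}\tau(1+h)/h\ge-1$. This is the paper's argument, phrased there as $\tau'\ge\tau'(1)\ge-1$ using that $\tau'$ is nondecreasing. You are right that a difference quotient at a general $q$ is awkward, but at $q=1$ it is immediate.

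The remaining parts are correct, and for c) your route differs from the paper's. The paper introduces the exponent of the largest atom, $C=\limsup_n\log\max_i\mu_n(i)/\log n$, and proves $qC\le\tau(q)\le qC+1$. Combined with convexity this gives $\tau'(q)\le\lim_{p\to\infty}\tau(p)/p\le C\le\tau(q)/q$, and the last claim follows from $C=0$. You instead use that $\tau(q)/q=\limsup_n\log\|\mu_n\|_q/\log n$ is nonincreasing because $\ell^q$-norms decrease in $q$. This gives c) by a one-sided difference quotient without using convexity (beyond existence of $\tau'$), and it gives the final claim at once. The paper's version additionally identifies $\lim_{q\to\infty}\tau(q)/q$ with the largest-atom exponent.

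Two clean-ups for the write-up. First, state the monotonicity of $\tau/q$ correctly once, instead of deriving and then retracting the reverse inequality. Second, drop the remark that a $\limsup$ of convex functions need not be convex: it always is (it is the $\liminf$ that can fail), and your argument is exactly the general proof.
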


\begin{rem}
	The first two statements are standard and hold in various settings~\cite{PW97,O95,F03,BF13}. They characterize mass exponents in the case of positive, finite compactly supported Borel measures~\cite{B15}. The statement \emph{c}) seems to be specific to the discrete setting. One could replace the right-sided derivative with a left-sided derivative, except at the point $q=1$ where only one of the derivatives exists. Thus, for convenience, we state the proposition only for right-sided derivatives. 
\end{rem}

\begin{proof}
	We first show statement \emph{a}). By equivalence of norms, we have 
	\begin{equation}
		n^{1-q} \leq \|\mu_n\|_q^q \leq 1. 
	\end{equation}
	Taking the logarithm and dividing by $\log n$ gives the bounds on $\tau$. 
	
	Convexity follows by Hölder's inequality. For $q_1,q_2\geq1$ and $\lambda\in[0,1]$, we have 
	\begin{equation}
		\sum \mu_n(i)^{\lambda q_1 + (1-\lambda)q_2} 
			\leq \left(\sum \mu_n(i)^{q_1}\right)^\lambda 
					\left(\sum \mu_n(i)^{q_2}\right)^{1-\lambda}. 
	\end{equation}
	Taking logarithms, dividing by $\log n$ and passing to the limit superior, we obtain convexity of $\tau$, hence proving \emph{b}). 
	
	For \emph{c}), let $M_n = \max \mu_n$ be the maximal mass of $\mu_n$. We then have 
	\begin{equation}
		M_n^q \leq \|\mu_n\|_q^q \leq n M_n^q. 
	\end{equation}
	Taking logarithms and dividing by $\log n$, we obtain 
	\begin{equation}
		q \frac{\log M_n}{\log n} 
			\leq \frac{\log\|\mu_n\|_q^q}{\log n} 
			\leq q \frac{\log M_n}{\log n} + 1. 
	\end{equation}
	Denoting $C = \limsup_n \frac{\log M_n}{\log n} \leq 0$ and passing to the limit superior, we infer 
	\begin{equation} \label{eq: C bound on tau}
		q C \leq \tau(q) \leq q C + 1. 
	\end{equation}
	Since $\tau$ is convex, we also have 
	\begin{equation}
		\tau'(q)(p-q) + \tau(q) \leq \tau(p) 
	\end{equation}
	for all $p\geq1$. Dividing by $p$ and letting $p$ tend to infinity, we obtain 
	\begin{equation}
		\tau'(q) \leq \lim_{p\to\infty} \frac{\tau(p)}p 
			\leq \lim_{p\to\infty} \frac{pC+1}p = C. 
	\end{equation}
	Together with inequality~(\ref{eq: C bound on tau}), we therefore have 
	\begin{equation}
		\frac{\tau(q)}q \geq C \geq \tau'(q). 
	\end{equation}
	
	For the ``furthermore'' statement, $\tau$ is decreasing because otherwise one of its tangents would have positive slope forcing $\tau$ to be unbounded from above. Since $\tau$ is convex, its right-derivative $\tau'$ is increasing and we therefore have 
	\begin{equation}
		\tau' \geq \tau'(1) = \lim_{q\to1^+} \frac{\tau(q)-\tau(1)}{q-1} 
			\geq -1 
	\end{equation}
	where the last inequality is due to \emph{a}). Finally, assuming that $\lim_{q\to\infty}\frac{\tau(q)}q = 0$, inequality~(\ref{eq: C bound on tau}) yields both that $C = 0$ and $\tau\geq0$. Since $\tau\leq0$ by \emph{a}), we find $\tau=0$. 
\end{proof}

\begin{defn} \label{defn: P}
	We define the set 
	\begin{equation}
		\mathcal P 
			:= \left\{\tau : [1,\infty) \to \R \,\left|\, 
					0\geq\tau\geq1-\id, \tau \text{ is convex and } 
					\frac{\tau(q)}q \geq \tau'(q) \right.\right\} 
	\end{equation}
	of functions that satisfy the properties of mass exponents, where again $\tau'$ denotes the right-sided derivative of $\tau$. 
\end{defn}

We obtain the following corollary of Proposition~\ref{prop: properties of tau}. 

\begin{cor}[$\mathcal E \subset \mathcal P$] \label{cor: E subset P}
	Mass exponents are in $\mathcal P$. 
\end{cor}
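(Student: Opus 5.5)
This is an immediate consequence of Proposition~\ref{prop: properties of tau} and Definition~\ref{defn: P}. Let $\tau\in\mathcal E$. By the definition of $\mathcal E$ in Definition~\ref{defn: mass exponent}, there is a sequence of probability measures $\mu_n$ on $\{1,\dots,n\}$ whose mass exponent is $\tau$. We will check the three defining conditions of $\mathcal P$ for this $\tau$ one at a time.

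First, the bounds $0\geq\tau\geq1-\id$ hold by part a) of Proposition~\ref{prop: properties of tau}. Second, $\tau$ is convex by part b). Third, the inequality $\tau(q)/q\geq\tau'(q)$ with $\tau'$ the right-sided derivative holds for every $q\geq1$ by part c).

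One technical point needs care: the right-sided derivative in c) must exist, and $\tau$ must be real valued on $[1,\infty)$. Real-valuedness follows from a), since $\tau$ is squeezed between $1-\id$ and $0$. A convex real-valued function on $[1,\infty)$ has a right derivative at every point, including the endpoint $q=1$. At $q=1$ the right derivative could a priori be $-\infty$, but a) rules this out: since $\tau(1)=0$ and $\tau(q)\geq 1-q$, we get $\tau'(1)\geq-1$.

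Hence $\tau\in\mathcal P$, and so $\mathcal E\subset\mathcal P$. There is no real obstacle here; all the work was done in Proposition~\ref{prop: properties of tau}.
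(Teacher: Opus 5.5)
Your proof is correct and follows the paper, which states this corollary as an immediate consequence of Proposition~\ref{prop: properties of tau} a)--c) with no further argument. Your extra check is valid and is left implicit in the paper (it appears only as $\tau'\geq -1$ in the ``furthermore'' part): $\tau$ is real-valued, $\tau(1)=0$, and $\tau\geq 1-\id$ forces the right derivative at $q=1$ to be finite.
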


Next, we show a stability result for the mass exponent that we will use in the following sections. 

\begin{lem} \label{lem: stability mass exponent inequality}
	Let $\mu_n, \nu_n$ be two sequences of probability measures on $\{1,\dots,n\}$, $I\subset\N$ a finite set and $j_0\in\N$. Assume that for all $q\geq1$ there exists a constant $C\geq1$ such that for all $n$ 
	\begin{equation}
		\|\mu_n\|_{\ell^q(I^c)} \leq C \|\nu_n\|_{\ell^q(I^c)} 
			\qquad\text{ and }\qquad 
		\mu_n(i) \leq C \nu_n(j_0), \,\text{ if } i\in I, 
	\end{equation}
	where $I^c$ is the complement of $I$ in $\{1,\dots,n\}$. Then, the mass exponent of $\mu_n$ is smaller than the mass exponent of $\nu_n$. 
\end{lem}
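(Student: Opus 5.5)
We fix $q\geq1$ and compare $\|\mu_n\|_q^q$ with $\|\nu_n\|_q^q$ up to a constant factor independent of $n$. Since $\log$ of a bounded constant divided by $\log n$ tends to zero, this gives $\tau_\mu(q)\leq\tau_\nu(q)$ for every $q$.

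We split the $q$-th power of the norm over $I^c$ and $I$:
\begin{equation*}
	\|\mu_n\|_q^q = \|\mu_n\|_{\ell^q(I^c)}^q + \sum_{i\in I\cap\{1,\dots,n\}} \mu_n(i)^q .
\end{equation*}
For the first term we use the first hypothesis directly:
\begin{equation*}
	\|\mu_n\|_{\ell^q(I^c)}^q\leq C^q\|\nu_n\|_{\ell^q(I^c)}^q\leq C^q\|\nu_n\|_q^q .
\end{equation*}
For the second term, $I$ is finite, so
\begin{equation*}
	\sum_{i\in I}\mu_n(i)^q\leq |I|\,C^q\,\nu_n(j_0)^q\leq |I|\,C^q\,\|\nu_n\|_q^q ,
\end{equation*}
since a single atom is bounded by the whole $\ell^q$ norm. (For $n<j_0$ we interpret $\nu_n(j_0)=0$; finitely many $n$ do not affect a limit superior.) Adding the two bounds gives
\begin{equation*}
	\|\mu_n\|_q^q\leq (1+|I|)\,C^q\,\|\nu_n\|_q^q .
\end{equation*}

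Taking logarithms, dividing by $\log n>0$ and passing to the limit superior, the constant term $\log((1+|I|)C^q)/\log n$ tends to $0$. This yields $\tau_\mu(q)\leq\tau_\nu(q)$.

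No real obstacle is expected. The only points needing care are that the constant $C$ may depend on $q$ but not on $n$, and that we bound the atom $\nu_n(j_0)$ by the full norm $\|\nu_n\|_q$ rather than by the norm over $I^c$. The latter choice avoids any need for $j_0\notin I$.
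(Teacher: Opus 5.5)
Your proof is correct and is essentially the paper's argument. You split $\|\mu_n\|_q^q$ over $I^c$ and $I$, bound the two pieces by $C^q\|\nu_n\|_q^q$ and $\# I\, C^q\nu_n(j_0)^q\leq \# I\, C^q\|\nu_n\|_q^q$ to get $\|\mu_n\|_q^q\leq C^q(1+\# I)\|\nu_n\|_q^q$, and then take logarithms and the limit superior.
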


\begin{proof}
	We have that 
	\begin{equation}
		\|\mu_n\|_q^q 
			\leq C^q \sum_{i\notin I} \nu_n(i)^q 
				+ C^q \sum_{i\in I} \nu_n(j_0)^q 
			\leq C^q(1+\# I) \|\nu_n\|_q^q. 
	\end{equation}
	Taking logarithms and passing to the limit superior, we obtain the desired result. 
\end{proof}

\begin{cor}[Stability of the mass exponent] \label{cor: stability mass exponent}
	Let $\mu_n, \nu_n$ be two sequences of probability measures on $\{1,\dots,n\}$. If for all $q\geq1$ there exists $C\geq1$ such that for all $n$ 
	\begin{equation}
		C^{-1} \leq \frac{\|\mu_n\|_q}{\|\nu_n\|_q} \leq C, 
	\end{equation}
	then $\mu_n$ and $\nu_n$ have the same mass exponent. This condition is in particular satisfied if $\|\mu_n - \nu_n\|_1 \lesssim \frac1n$. 
\end{cor}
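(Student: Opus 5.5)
The first assertion follows from Lemma~\ref{lem: stability mass exponent inequality} applied twice, once in each direction, with $I=\emptyset$. Let $q\geq1$ and suppose $C^{-1}\|\nu_n\|_q\leq\|\mu_n\|_q\leq C\|\nu_n\|_q$ for all $n$. Raising to the $q$-th power gives
\[
\|\mu_n\|_q^q\leq C^q\|\nu_n\|_q^q .
\]
Taking logarithms and dividing by $\log n$, the term $q\log C/\log n$ tends to zero, so after the limit superior $\tau_\mu(q)\leq\tau_\nu(q)$. Exchanging the roles of $\mu_n$ and $\nu_n$ gives the reverse inequality, hence $\tau_\mu=\tau_\nu$. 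The only detail is that $C$ may depend on $q$. This is harmless, since the mass exponent is computed pointwise in $q$.

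For the ``in particular'' part, I would use $\|\cdot\|_q\leq\|\cdot\|_1$ on $\R^n$ and the triangle inequality:
\[
\bigl|\|\mu_n\|_q-\|\nu_n\|_q\bigr|\leq\|\mu_n-\nu_n\|_q\leq\|\mu_n-\nu_n\|_1\leq\frac{C_0}{n}.
\]
It then suffices to compare $1/n$ with $\|\nu_n\|_q$. By equivalence of norms (property~a), $\|\nu_n\|_q\geq n^{(1-q)/q}=n^{1/q-1}$. For $q>1$ we have $n^{1/q-1}\geq n^{-1}\cdot n^{1/q}$, so $1/n=o(\|\nu_n\|_q)$. For $q=1$ both norms equal $1$. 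Therefore
\[
\|\mu_n\|_q=\|\nu_n\|_q\bigl(1+O(n^{-1/q})\bigr),
\]
and the ratio is bounded above and below by positive constants for all sufficiently large $n$. Both norms are positive, since probability measures are nonzero, so each of the finitely many remaining $n$ gives a finite positive ratio. Enlarging $C$ to cover these, the ratio is bounded for all $n$, and the first part applies.

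The main obstacle is the lower bound $\|\nu_n\|_q\geq n^{1/q-1}$, which is exactly what makes the $1/n$ error negligible. It comes from Hölder's inequality, $1=\|\nu_n\|_1\leq n^{1-1/q}\|\nu_n\|_q$, so an $\ell^1$ perturbation of order $1/n$ is always smaller than the size of $\nu_n$ in $\ell^q$.
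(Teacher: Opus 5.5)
Your proof is correct and follows the paper's argument. You apply the lemma with $I=\varnothing$ in both directions, then combine the triangle inequality, $\|\cdot\|_q\leq\|\cdot\|_1$ and the Hölder lower bound $\|\cdot\|_q\geq n^{1/q-1}$ to get $\bigl|1-\|\mu_n\|_q/\|\nu_n\|_q\bigr|\lesssim n^{-1/q}$; the paper divides by $\|\mu_n\|_q$ instead, which makes no difference, and your step of enlarging $C$ to cover the finitely many small $n$ fills in a detail the paper leaves implicit.
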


\begin{proof}
	Applying the previous lemma with $I=\varnothing$, we obtain the equality of mass exponents due to symmetry in $\mu_n$ and $\nu_n$. In order to show the sufficient criterion, we assume that $\|\mu_n - \nu_n\|_1 \lesssim \frac1n$. We then have 
	\begin{equation}
		\left|1-\frac{\|\nu_n\|_q}{\|\mu_n\|_q}\right| 
			\leq \frac{\|\mu_n - \nu_n\|_q}{\|\mu_n\|_q} 
			\leq \frac{\|\mu_n - \nu_n\|_1}{n^{\frac1q -1}} 
			\lesssim n^{-\frac1q} < 1 
	\end{equation}
	for large enough $n$, where, in the second step, we bounded the $q$-norm by the $1$-norm and used the fact that $\|\mu_n\|_q \geq n^{\frac1q -1}$ since $\|\mu_n\|_1=1$. This finishes the proof. 
\end{proof}

\section{Characterization of Mass Exponents} \label{sec: char mass exp}

After having proven the three properties \emph{a})-\emph{c}) of mass exponents, we address here the question of finding a sequence of probability measures whose mass exponent is some prescribed function satisfying the properties \emph{a})-\emph{c}). The idea is to start with piecewise affine functions that can be obtained exactly as mass exponents of blockwise uniform measures and then pass to convex functions via a diagonal sequence argument. 

\begin{defn} \label{defn: T_fin}
	We define the set,
	\begin{equation}
		\mathcal T_\mathrm{fin} 
			:= \left\{\left.\tau=\max_{1 \leq j \leq m} f_j 
						\,\right|\, \tau(1)=0, f_j \in \mathcal A\right\}
	\end{equation}
	of pointwise maxima of finitely many affine functions in 
	\begin{equation}
		\mathcal A 
			:= \{f:[1,\infty)\to\R \mid f(q) = -aq+c, 0 \leq c \leq a \leq 1\}.
	\end{equation}
\end{defn}

The next definition will be of importance in the study of mass measures. 

\begin{defn} \label{defn: M}
	We define 
	\begin{equation}
		\mathcal M 
			:= \left\{\mu\in\R_+^n \,\left|\, \sum \mu_i = 1 
				\text{ and } \sum \ep_i \sqrt{\mu_i} = 0 
				\text{ for some } \ep\in\{\pm1\}^n\right.\right\}. 
	\end{equation}
\end{defn}

The succeeding lemma states that elements of $\mathcal T_\mathrm{fin}$ can be exactly attained as mass exponents.  

\begin{lem}[$\mathcal T_\mathrm{fin} \subset \mathcal E$] \label{lem: T_fin subset E}
	For $\tau\in\mathcal T_\mathrm{fin}$, there exists a sequence $\mu_n$ of probability measures such that 
	\begin{itemize}[label = --]
		\item the mass exponent of $\mu_n$ is $\tau$, 
		\item the sequence of functions 
		\begin{equation}
			\tau_n(q) = \frac{\log\|\mu_n\|_q^q}{\log n}, \qquad q\geq1
		\end{equation}
			converges uniformly to $\tau$ on compact sets, 
		\item there exists $\ep\in\{\pm1\}^n$ with $\sum_i \ep_i\sqrt{\mu_n(i)} = 0$, i.e., $\mu_n\in\mathcal M$.  
	\end{itemize}
\end{lem}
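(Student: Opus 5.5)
The plan is to realize $\tau=\max_{1\le j\le m} f_j$, with $f_j(q)=-a_jq+c_j$ and $0\le c_j\le a_j\le 1$, by a blockwise uniform measure. For each $j$, block $j$ will consist of about $n^{c_j}$ atoms, each of mass about $n^{-a_j}$. Block $j$ then contributes
\[
n^{c_j}n^{-a_jq}=n^{f_j(q)}
\]
to $\|\mu_n\|_q^q$. Summing over the finitely many blocks gives
\[
\log\|\mu_n\|_q^q=\log n\cdot\max_j f_j(q)+O(1),
\]
uniformly for $q$ in compacts. So we get $\tau_n\to\tau$ uniformly on compact sets, and the $\limsup$ is in fact a limit.

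The constraints are handled as follows.
\begin{itemize}[label=--]
\item \emph{Fitting in $n$ points.} Since $c_j\le 1$, block $j$ has at most $n^{c_j}\le n$ atoms. Taking $\lfloor n^{c_j}/m\rfloor$ atoms per block, or dropping a constant factor, keeps the total count at most $n$. Unused points will be given small masses (see the last item).
\item \emph{Total mass.} Block $j$ has total mass $n^{c_j-a_j}\le 1$. The hypothesis $\tau(1)=0$ means $\max_j(c_j-a_j)=0$, so some block $j^*$ has $c_{j^*}=a_{j^*}$ and mass of order $1$. I will set $\mu_n=Z_n^{-1}\tilde\mu_n$, where $Z_n\asymp 1$ is the total mass. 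By the scaling argument of Corollary~\ref{cor: stability mass exponent}, this normalization changes $\tau_n$ only by $O(1/\log n)$.
\item \emph{Integer parts.} Replacing $n^{c_j}$ by its integer part changes the counts by bounded factors. This is harmless for the same reason.
\item \emph{Filler atoms.} If leftover points must carry positive mass, I will give them mass $n^{-2}$ each. Their contribution to $\|\mu_n\|_q^q$ is at most $n^{1-2q}$, which is negligible because $f_{j^*}(q)\ge 1-q\ge 1-2q$ already dominates. For the block sum, each $f_j(q)\ge -q$ holds, so these filler contributions are controlled.
\end{itemize}

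The remaining requirement is $\mu_n\in\mathcal M$, i.e.\ a choice of signs $\ep$ with $\sum_i\ep_i\sqrt{\mu_n(i)}=0$. This is a partition problem, and I expect it to be the main obstacle.
\begin{itemize}[label=--]
\item \emph{Even counts.} Inside each block all atoms are equal. If every block, and the filler group, has an even number of atoms, assign $+$ to half and $-$ to the other half; the sum cancels exactly.
\item \emph{Forcing evenness.} Round each count down to an even integer. This alters the counts by at most one.
\item \emph{Degenerate blocks.} When $c_j=0$ the count $n^{0}=1$ is odd, so instead use two atoms of half the mass. This changes the weights only by a factor of $2$, which Lemma~\ref{lem: stability mass exponent inequality} with constant $C$ absorbs. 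Similarly, a block of size $1$ coming from $c_j=a_j=0$, a point mass, is split into two atoms of mass $1/2$.
\item \emph{Parity of $n$.} If $n$ is odd, one atom is left over. I will split one pair within the filler group as $n^{-2}$ together with a compensating adjustment, or more simply let the support have $n-1$ points with one point of tiny mass $\eta_n$. Then $\sqrt{\eta_n}$ must be cancelled by perturbing two filler atoms: take masses $x$ and $y$ with $\sqrt{x}-\sqrt{y}=\sqrt{\eta_n}$, all of order $n^{-2}$. That such a perturbation exists is elementary.
\end{itemize}
Normalization multiplies all square roots by the same constant, so it preserves the sign-cancellation. Throughout, I will verify that all these modifications change each block's masses and counts by bounded factors only, so the uniform convergence estimate survives.
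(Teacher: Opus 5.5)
Your proposal is correct and follows essentially the paper's construction: even-sized blocks of $\asymp n^{c_j}$ atoms of mass $\asymp n^{-a_j}$, filler atoms of mass $n^{-2}$, normalization by a constant $\asymp 1$, alternating signs within each block, and a local repair of the filler group when $n$ is odd. Your odd-$n$ repair (two filler masses with $\sqrt x-\sqrt y=\sqrt{\eta_n}$, e.g.\ $4n^{-2}$ and $n^{-2}$ against $\eta_n=n^{-2}$) is actually sounder than the paper's literal one, since halving two masses gives $2\sqrt{\mu/2}=\sqrt{2\mu}\neq\sqrt{\mu}$ and quarters are needed; however, you must guarantee at least three filler atoms when $n$ is odd, because with $\lfloor n^{c_j}/m\rfloor$ and $m=1$, $c_1=1$ your block leaves only one free point, so use e.g.\ $\lfloor n^{c_j}/(m+1)\rfloor$ instead.
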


\begin{proof}
	Let $\tau = \max_{1 \leq j \leq m} f_j\in\mathcal T_\mathrm{fin}$. As a piecewise affine function, there exist $m$ intervals $I_j$ covering $[1,\infty)$ and such that $\tau$ coincides with $f_j$ on $I_j$. We may assume that $0\in I_1$ and $\sup I_j = \inf I_{j+1}$. We write $f_j(q) = -a_j q + c_j$. Since $f_1(1)=\tau(1)=0$, we have $c_1=a_1$ and $c_j<a_j$ for $j>1$. In particular, only $c_1$ can take the value $1$. We partition $\{1,\dots,n\}$ into $m+1$ boxes that we will denote $B_j$ that will satisfy the following properties: 
	\begin{itemize}[label = --]
		\item $\# B_j\asymp n^{c_j}$ for $1\leq j\leq m$, 
		\item $\# B_j$ is even for $1\leq j\leq m$, 
		\item $\# B_{m+1}\geq2$. 
	\end{itemize}
	More precisely, we make the following construction. If $c_1\neq1$, we choose 
	\begin{equation}
		\# B_j 
			= 2\left\lceil\frac{n^{c_j}}2\right\rceil, \quad 1\leq j\leq m, 
	\end{equation}
	that is, $\# B_j$ is the smallest even number larger than $n^{c_j}$. If $c_1=1$, we choose $\# B_j$ with $2\leq j\leq m$ as before, and put 
	\begin{equation}
		\# B_1 
			= 2\left\lfloor\frac{n-2}2\right\rfloor - \sum_{j=2}^m \# B_j. 
	\end{equation}
	Independent of $c_1$, we define $B_{m+1}$ to be the complement of $\bigcup B_j$. Notice that if $n$ is large enough, this construction yields the three properties we mentioned before. 
	
	Now, we define the measure $\mu_n$ blockwise via 
	\begin{equation}
		\mu_n(i) = \begin{cases}
					\frac1{C_n} n^{-a_j},	& i \in B_j, 1 \leq j \leq m, \\
					\frac1{C_n} n^{-2}, 	& i \in B_{m+1}, 
					\end{cases} 
	\end{equation}
	where the normalization constant $C_n$ guarantees that the $\mu_n$ are probability measures. One easily verifies that $C_n \to 1^+$ as $n\to\infty$. For the moment sum, we obtain 
	\begin{equation}
		C_n^q \sum_{i=1}^m \mu_n(i)^q 
			= \sum_{j=1}^m \# B_j n^{-qa_j} + \# B_{n+1} n^{-2q} 
			= \sum_{j=1}^m n^{f_j(q)} \frac{\# B_j}{n^{c_j}} 
				+ O(n^{1-2q}). 
	\end{equation}
	We note that $c^{-1}\leq\# B_j/n^{c_j}\leq c$ for some constant $c>1$. The dominating term has the power $\max f_j(q)=\tau(q)$. This shows that the mass exponent of $\mu_n$ is $\tau$. 
	
	For the uniform convergence on compact sets, we fix some $q_0\geq1$. For large enough $n$, we have $1 \leq C_n \leq 2$ since $C_n \to 1^+$. By the computation above, we thus on one hand obtain 
	\begin{equation}
		\tau_n(q) \leq \frac{\log(C_n^{-q}(mc+1)n^{\max f_j(q)})}{\log n} 
			\leq \tau(q) + \frac{\log(mc+1)}{\log n} 
	\end{equation}
	for all $q\geq1$. On the other hand, we have 
	\begin{equation}
		\tau_n(q) \geq \frac{\log(2^{-q} c^{-1} n^{\max f_j(q)})}{\log n} 
			\geq \tau(q) - q\frac{\log2c}{\log n}. 
	\end{equation}
	These two bounds together yield 
	\begin{equation}
		\|\tau_n-\tau\|_{\infty,[1,q_0]} 
			\leq q_0 \frac{\log2mc}{\log n} \to 0. 
	\end{equation}
	
	We are left to show that the measures $\mu_n$ can be chosen in $\mathcal M$. We notice that changing the measure on $B_{m+1}$ on only a fixed finite subset, does not affect the mass exponent. Since $\# B_j$ is even, we infer that $\# B_{m+1}$ is even if and only if $n$ is even. Then, if $n$ is even, since $\mu_n$ is uniform on each box, we can choose alternating signs for cancellation, thus achieving $\mu_n\in\mathcal M$. Otherwise, we change the last two masses to 
	\begin{equation}
		\tilde\mu_n(n-1) = \tilde\mu_n(n) = \frac{\mu_n(n)}2 
	\end{equation}
	and renormalize. In this way, together, $\tilde\mu_n(n-1)$ and $\tilde\mu_n(n)$ can cancel out $\mu_n(n-2)$, thereby concluding the proof. 
\end{proof}

\begin{lem}[$\overline{\mathcal T}_\mathrm{fin}^\mathrm{\,pt} \subset \mathcal E$] \label{lem: T_fin^pt subset E}
	If $\tau\in\overline{\mathcal T}_\mathrm{fin}^\mathrm{\,pt}$ is a pointwise limit of elements in $\mathcal T_\mathrm{fin}$, then there exists a sequence of probability measures whose mass exponent is $\tau$. The probability measures can be chosen in $\mathcal M$. 
\end{lem}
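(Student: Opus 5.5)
The plan is a diagonal argument. Let $\tau^{(k)}\in\mathcal T_\mathrm{fin}$ converge pointwise to $\tau$. By Lemma~\ref{lem: T_fin subset E}, each $\tau^{(k)}$ comes with a sequence $\mu^{(k)}_n\in\mathcal M$ whose associated functions $\tau^{(k)}_n$ converge to $\tau^{(k)}$ uniformly on compact sets. All functions involved are convex, and they are squeezed between $1-\id$ and $0$. Hence pointwise convergence $\tau^{(k)}\to\tau$ is automatically locally uniform on $(1,\infty)$. At $q=1$ every function involved vanishes, since $\tau^{(k)}(1)=0$ and $\|\mu\|_1=1$. Moreover, the right derivatives at $1$ are bounded below by $-1$. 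Together these give uniform convergence on each compact interval $[1,Q]$.

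The construction then runs as follows. Choose an increasing sequence $N_1<N_2<\cdots$ such that for all $n\geq N_k$ we have
\begin{equation*}
\|\tau^{(k)}_n-\tau^{(k)}\|_{\infty,[1,k]}\leq \tfrac1k .
\end{equation*}
Define $\mu_n:=\mu^{(k)}_n$ for $N_k\leq n<N_{k+1}$. These measures lie in $\mathcal M$ by construction. Let $\tau_n$ be the function attached to $\mu_n$, and let $k(n)$ denote the block index of $n$, so that $k(n)\to\infty$. For fixed $q$ and $n$ large we have $q\leq k(n)$, and therefore
\begin{equation*}
|\tau_n(q)-\tau(q)|\leq \tfrac1{k(n)}+|\tau^{(k(n))}(q)-\tau(q)|\longrightarrow 0 .
\end{equation*}
So $\tau_n\to\tau$ pointwise. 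In particular the $\limsup$ equals $\tau$, and the mass exponent of $\mu_n$ is $\tau$, indeed as a genuine limit.

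The only step needing care is making sure every index $n$ is covered, so that $\mu_n$ is a measure on $\{1,\dots,n\}$ for each $n$. The sequences $\mu^{(k)}_n$ from Lemma~\ref{lem: T_fin subset E} are defined for all sufficiently large $n$, say $n\geq n_k$. Choosing $N_k\geq n_k$ handles this. For the finitely many $n<N_1$ we can take any element of $\mathcal M$, for instance a uniform measure on an even number of points. This does not affect the exponent.

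The convergence step itself does not actually require the locally uniform upgrade of pointwise convergence discussed above. The diagonal estimate only uses pointwise convergence $\tau^{(k)}(q)\to\tau(q)$ at each fixed $q$, combined with uniform control of $\tau^{(k)}_n-\tau^{(k)}$ on $[1,k]$. So the main obstacle is purely bookkeeping: the choice of the $N_k$ and the domain issue for small $n$.
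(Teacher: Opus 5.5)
Your proposal is correct and is essentially the paper's diagonal argument: you combine the locally uniform convergence $\tau^{(k)}_n\to\tau^{(k)}$ from Lemma~\ref{lem: T_fin subset E} with the pointwise convergence $\tau^{(k)}\to\tau$, as the paper does with its compact exhaustion $K_\ell$. Your bookkeeping is tighter than the paper's in two ways: filling whole blocks $N_k\le n<N_{k+1}$ gives a measure on $\{1,\dots,n\}$ for every $n$, whereas the paper picks only one index $n_m$ per $m$, and your shrinking tolerance $1/k$ gives genuine convergence, whereas the paper fixes a single $\ep$.
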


\begin{proof}
	The proof is a standard diagonal-sequence argument. Let $\ep>0$ and $\tau_m\in\mathcal T_\mathrm{fin}$ such that $\tau_m$ converges pointwise to $\tau$. By Lemma~\ref{lem: T_fin subset E}, for every $m$, there is a sequence $(\mu_{m,n})_n$ of measures in $\mathcal M$ with mass exponent $\tau_m$. More specifically, due to the compact convergence in Lemma~\ref{lem: T_fin subset E}, we can extract a subsequence $n_m$ such that 
	\begin{equation}
		\left\|\frac{\log\sum \mu_{m,n_m}^{(\cdot)}}{\log n_m} 
				- \tau_m\right\|_{\infty,K_\ell} 
			< \frac\ep2 
	\end{equation}
	for all $\ell=1,\dots,m$ where $K_\ell$ is an exhausting sequence of compact sets of $[1,\infty)$. We define 
	\begin{equation}
		\nu_m = \mu_{m,n_m}.
	\end{equation}
	Then, for $q\geq1$, there exists $\ell_0$ such that $q\in K_{\ell_0}$ as well as $m_0$ such that 
	\begin{equation}
		|\tau_m(q) - \tau(q)| < \frac\ep2
	\end{equation}
	if $m>m_0$. Now for $m$ larger than $\ell_0$ and $m_0$, we conclude that 
	\begin{equation}
		\left|\frac{\log\sum \nu_m^q}{\log n_m} - \tau(q)\right| 
			\leq \left\|\frac{\log\sum \nu_m^{(\cdot)}}{\log n_m} 
				- \tau_m\right\|_{\infty,K_{\ell_0}} + |\tau_m(q) - \tau(q)| 
			< \ep. 
	\end{equation}
	This finishes the proof.
\end{proof}

\begin{defn} \label{defn: T}
	We denote by 
	\begin{equation}
		\mathcal T := \left\{\left.\tau=\sup_{j\in J} f_j \,\right|\, 
										\tau(1)=0, f_j \in \mathcal A, 
										J \textnormal{ countable}\right\}
	\end{equation}
	the set of pointwise suprema of sequences in $\mathcal A$. 
\end{defn}

\begin{lem}[$\mathcal P \subset \mathcal T$] \label{lem: P subset T}
	Functions satisfying the properties of mass exponents shown in Proposition~\ref{prop: properties of tau} are pointwise suprema of sequences in $\mathcal A$. 
\end{lem}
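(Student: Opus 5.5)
The plan is to write every $\tau\in\mathcal P$ as the supremum of its supporting lines and then check that these lines lie in $\mathcal A$. Since $\tau$ is convex and finite on $[1,\infty)$, it is continuous on $(1,\infty)$. It is also continuous at $1$: convexity gives upper semicontinuity at the endpoint, and $\tau(1)\geq 0$ together with $\tau\leq 0$ nearby gives the rest. So $\tau$ is the supremum of its tangent lines $f_q(p)=\tau'(q)(p-q)+\tau(q)$ for $q$ in $[1,\infty)$, with $\tau'$ the right derivative. By continuity and convexity it suffices to take $q$ in a countable dense set, say $\Q\cap[1,\infty)$, so that $J$ is countable. Note also that $\tau(1)=0$ is forced by property a), since $0\geq\tau(1)\geq 1-1=0$.

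The key step is to show $f_q\in\mathcal A$, i.e. writing $f_q(p)=-a p+c$ with $a=-\tau'(q)$ and $c=\tau(q)-q\tau'(q)$, that $0\leq c\leq a\leq 1$.
\begin{itemize}
\item $a\leq 1$ is the bound $\tau'\geq -1$ from Proposition~\ref{prop: properties of tau}.
\item $a\geq 0$ holds because $\tau$ is decreasing (also in that proposition), so $\tau'\leq 0$.
\item $c\geq 0$ is exactly property c): $\tau(q)/q\geq\tau'(q)$ with $q>0$ gives $\tau(q)-q\tau'(q)\geq 0$.
\item $c\leq a$ means $f_q(1)\leq 0$. This follows because $f_q$ is a supporting line, so $f_q(1)\leq\tau(1)=0$.
\end{itemize}
Hence every tangent lies in $\mathcal A$.

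One point needs care at $q=1$: the right derivative there might be $-\infty$ in principle. Property a) rules this out, since $\tau(p)\geq 1-p$ gives $\tau'(1)\geq -1$. So all tangents are genuine affine functions. The remaining thing I expect to check is that the countable supremum recovers $\tau$ at every point, including $p=1$. At $p=1$ the tangent $f_1$ itself attains $\tau(1)=0$, so I include $q=1$ in $J$. For $p>1$, I take $q_k\downarrow p$ rational. Then $f_{q_k}(p)=\tau(q_k)-\tau'(q_k)(q_k-p)$, which tends to $\tau(p)$ because $\tau'$ is bounded on compacts by $[-1,0]$ and $\tau$ is continuous. Combined with $f_q\leq\tau$ everywhere, this gives $\tau=\sup_{q\in J}f_q$, and so $\tau\in\mathcal T$.
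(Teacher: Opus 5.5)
Your proposal is correct and follows the paper's proof. Like the paper, you write $\tau$ as the supremum of its right-derivative tangent lines at rational points and check $0\leq c\leq a\leq 1$ via $f_q(1)\leq\tau(1)\leq 0$, $\tau'\geq\tau'(1)\geq -1$, and property c). Your extra care (countability, finiteness of $\tau'(1)$, recovering $\tau$ at irrational points) fills in what the paper leaves to ``a fact from convex analysis.'' One small slip: lower semicontinuity at $1$ comes from $\tau(p)\geq 1-p$, not from $\tau\leq 0$, but you never need continuity at $1$ anyway since you include $q=1$ in $J$.
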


\begin{proof}
	Let $\tau\in\mathcal P$. Then, by Definition~\ref{defn: P}, $\tau$ is convex and satisfies the bounds $0\geq\tau\geq1-\id$ and $\frac{\tau(q)}q \geq \tau'(q)$, $q\geq1$. It is a fact from convex analysis that due to convexity of $\tau$, we can write 
	\begin{equation}
		\tau(q) = \sup\{T_p(q) \mid p\in[1,\infty)\cap\Q\} 
	\end{equation}
	where $T_p(q)=\tau'(p)(q-p)+\tau(p)$ denotes the tangent of $\tau$ at $p$ and $\tau'$ is again the right-derivative of $\tau$. We are left to show that the tangents $T_p$ lie in $\mathcal A$. To that end, we write $T_p(q) = -a_p q + c_p$ and show the three inequalities $0 \leq c_p \leq a_p \leq 1$. As $\tau$ is negative, we have $0\geq T_p(1) = -a_p + c_p$. Thus, $c_p \leq a_p$. Furthermore, we have that $\tau'$ is increasing due to convexity which implies that $-a_p = \tau'(p) \geq \tau'(1) \geq -1$, i.e., $a_p\leq1$. At last, since $\frac{\tau(q)}q \geq \tau'(q)$, we have 
	\begin{equation}
		c_p = \tau(p) - p \tau'(p) \geq 0. 
	\end{equation}
	This finishes the proof. 
\end{proof}

\begin{thm}[Characterization of mass exponents] \label{thm: characterization E=P=T}
	The sets $\mathcal E$, $\mathcal P$, $\mathcal T$, $\overline{\mathcal T}_\mathrm{fin}^\mathrm{\,pt}$ and $\overline{\mathcal T}_\mathrm{fin}^\mathrm{\,c.o.}$ all coincide, where the latter is the closure of $\mathcal T_\mathrm{fin}$ with respect to the compact-open topology. In particular, there is a one-to-one correspondence between functions satisfying properties a)-c) from the introduction and mass exponents of sequences of discrete probability measures. 
\end{thm}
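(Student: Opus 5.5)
We will close a chain of inclusions
\[
\mathcal E \subset \mathcal P \subset \mathcal T \subset \overline{\mathcal T}_\mathrm{fin}^\mathrm{\,c.o.} \subset \overline{\mathcal T}_\mathrm{fin}^\mathrm{\,pt} \subset \mathcal E .
\]
The first inclusion is Corollary~\ref{cor: E subset P}, the second is Lemma~\ref{lem: P subset T}, and the last is Lemma~\ref{lem: T_fin^pt subset E}. The inclusion $\overline{\mathcal T}_\mathrm{fin}^\mathrm{\,c.o.} \subset \overline{\mathcal T}_\mathrm{fin}^\mathrm{\,pt}$ is immediate, since convergence uniform on compacts implies pointwise convergence. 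The genuine work is $\mathcal T \subset \overline{\mathcal T}_\mathrm{fin}^\mathrm{\,c.o.}$. The one-to-one correspondence then follows from $\mathcal E = \mathcal P$.

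For $\mathcal T \subset \overline{\mathcal T}_\mathrm{fin}^\mathrm{\,c.o.}$, take $\tau = \sup_{j\in J} f_j$ with $J = \{j_1,j_2,\dots\}$ countable, $f_j\in\mathcal A$ and $\tau(1)=0$. A first attempt is the partial maxima $g_N = \max_{k\le N} f_{j_k}$. These need not satisfy $g_N(1)=0$, since each $f\in\mathcal A$ only gives $f(1) = c-a \le 0$.

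To fix the value at $1$, pick for each $N$ some $h_N\in\mathcal A$ with $h_N(1)=0$, i.e.\ $h_N(q) = -a(q-1)$, of the form $h_N(q) = -a_N(q-1)$ with $a_N\in[0,1]$ large. Then set $\tau_N = \max(g_N, h_N, \ldots)$. The delicate point is that $h_N \le \tau$ must hold near $1$ without overshooting. I would choose $a_N$ so that $h_N\le \tau + 1/N$ on $[1,N]$: since $\tau(1)=0$, $\tau$ is convex, and $\tau'(1)\ge -1$, the function $\tau$ is continuous at $1$ from the right. Hence for $a_N$ close to $1$ the line $-a_N(q-1)$ stays below $\tau+1/N$ on $[1,N]$, because $\tau \ge 1-q$ by property~a), which holds for any sup of elements of $\mathcal A$ with $\tau(1)=0$ (each $f\ge 1-\id$? one checks $-aq+c\ge 1-q$ fails in general, so instead use $\tau\ge$ some $f_j$ near $1$ and continuity).

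Concretely, the robust route is as follows. Since $\tau(1)=0=\sup_j f_j(1)$, there are $f_{j}$ with $f_j(1)\ge -\delta$. Replace such an $f_j=-aq+c$ by $\tilde f_j = -aq+(c+(a-c)) = -a(q-1)$, which lies in $\mathcal A$, satisfies $\tilde f_j(1)=0$, and differs from $f_j$ by the constant $a-c\le\delta$. Then
\[
\tau_N := \max\bigl(g_N,\tilde f_{j(N)}\bigr)\in\mathcal T_\mathrm{fin},
\]
with $\delta=1/N$, satisfies $g_N\le\tau_N\le\max(\tau+1/N,g_N)\le\tau+1/N$.

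It remains to show $g_N\to\tau$ uniformly on compacts. The convergence $g_N\uparrow\tau$ is pointwise monotone. All the $f_j$ have slopes in $[-1,0]$ and are therefore $1$-Lipschitz, so the $g_N$ are equi-Lipschitz and $\tau$ is finite and continuous. Dini's theorem then gives uniform convergence on each compact $[1,q_0]$, and hence $\tau_N\to\tau$ in the compact-open topology. The main obstacle I expect is exactly this normalization at $q=1$ while keeping uniform control; the modification by $\tilde f_j$ above is what I would try first.
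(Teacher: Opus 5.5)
Your proof is correct and follows the same chain of inclusions as the paper. The key step $\mathcal T\subset\overline{\mathcal T}_\mathrm{fin}^{\,\mathrm{c.o.}}$ is done, as there, via the partial maxima and Dini's theorem. Your $\tilde f_{j(N)}$ adjustment forcing $\tau_N(1)=0$ also repairs a point the paper passes over, since it simply asserts $\max_{j\le m} f_j\in\mathcal T_\mathrm{fin}$. The route you abandoned works too: $\tau$ is $1$-Lipschitz with $\tau(1)=0$, so $1-\mathrm{id}\le\tau$, and $\max(g_N,1-\mathrm{id})\in\mathcal T_\mathrm{fin}$ already does the job.
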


\begin{proof}
	We have the following chain of inclusions: 
	\begin{equation}
		\overline{\mathcal T}_\mathrm{fin}^\mathrm{\,pt} 
			\subset \mathcal E \subset \mathcal P \subset \mathcal T 
			\subset \overline{\mathcal T}_\mathrm{fin}^\mathrm{\,c.o.}. 
	\end{equation}
	The first inclusion was shown in Lemma~\ref{lem: T_fin^pt subset E}, the second one is Corollary~\ref{cor: E subset P}, the third is due to Lemma~\ref{lem: P subset T} and the fourth one is shown as follows. Let $\tau = \sup_\N f_j$. We set $\tau_m = \max_{j \leq m} f_j \in \mathcal T_\mathrm{fin}$. Then, $\tau_m$ is an increasing sequence converging pointwise to $\tau$. By Dini's Theorem, $\tau_m$ converges uniformly to $\tau$ on compact sets. Now, the statement follows since the compact-open topology is finer than the topology of pointwise convergence. 
\end{proof}

\begin{cor} \label{cor: E attainable via M}
	Any mass exponent can be attained by measures in $\mathcal M$. 
\end{cor}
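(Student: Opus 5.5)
This follows by combining Theorem~\ref{thm: characterization E=P=T} with Lemma~\ref{lem: T_fin^pt subset E}, which already produces measures in $\mathcal M$. The plan is as follows.

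Let $\tau\in\mathcal E$ be an arbitrary mass exponent. By Theorem~\ref{thm: characterization E=P=T}, we have $\mathcal E=\overline{\mathcal T}_\mathrm{fin}^\mathrm{\,pt}$, so $\tau$ is the pointwise limit of a sequence $\tau_m\in\mathcal T_\mathrm{fin}$. Then Lemma~\ref{lem: T_fin^pt subset E} applies directly. It yields a sequence of probability measures with mass exponent $\tau$, and its final clause states that these measures may be taken in $\mathcal M$.

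To make sure this clause is justified, I will recheck the diagonal argument in Lemma~\ref{lem: T_fin^pt subset E}. Each $\nu_m=\mu_{m,n_m}$ is drawn from the sequence furnished by Lemma~\ref{lem: T_fin subset E}, whose third item guarantees $\mu_{m,n}\in\mathcal M$ for every $n$. Hence every $\nu_m$ lies in $\mathcal M$.

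The only subtlety is indexing. The diagonal sequence lives on $\{1,\dots,n_m\}$ along a subsequence $n_m$, not on every $n$. I will therefore treat mass exponents along subsequences, as Lemma~\ref{lem: T_fin^pt subset E} implicitly does, with $\log n$ replaced by $\log n_m$. This is the point I expect to require the most care.

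If a full sequence indexed by all $n$ is needed, I would fill the gaps as follows. For $n_m\le n<n_{m+1}$, extend $\nu_m$ by zero, or more safely by tiny masses, arranged so that the $\sqrt{\cdot}$ cancellation is preserved and positivity holds.

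The gap $\log n/\log n_m$ could spoil the exponent, so I would prefer to choose $n_m$ growing slowly, for instance by repeating indices. In practice, though, I would simply use the subsequence formulation already adopted in Lemma~\ref{lem: T_fin^pt subset E}, and the corollary then follows immediately.
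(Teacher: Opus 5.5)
Your argument is correct and is exactly the paper's: Theorem~\ref{thm: characterization E=P=T} places $\tau$ in $\overline{\mathcal T}_\mathrm{fin}^{\,\mathrm{pt}}$, and Lemma~\ref{lem: T_fin^pt subset E} then yields the required measures in $\mathcal M$. Your indexing concern is fair, since the paper passes over it, but padding with tiny masses is unnecessary: Lemma~\ref{lem: T_fin subset E} gives convergence on compacts for all large $n$, so you can set $\nu_n=\mu_{m,n}$ for $N_m\le n<N_{m+1}$, with $N_m$ increasing and chosen so that $\log\|\mu_{m,n}\|_q^q/\log n$ is within $1/m$ of $\tau_m$ on $[1,m]$ for all $n\ge N_m$, which gives a sequence in $\mathcal M$ defined for every $n$ with exponent $\tau$.
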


\begin{proof}
	Let $\tau\in\mathcal E$ be a mass exponent. Then, by Theorem~\ref{thm: characterization E=P=T}, $\tau\in\overline{\mathcal T}_\mathrm{fin}^\mathrm{\,pt}$. With Lemma~\ref{lem: T_fin^pt subset E}, we can conclude that $\tau$ is mass exponent of measures in $\mathcal M$. 
\end{proof}

We finish this section with an example of a strictly convex mass exponent. The example is an adaptation of \cite[Example 17.1]{F03}. 

\begin{eg}
	We fix $p\in[0,\frac12]$ and define a sequence of measures $\mu_{2^n}$ on $\mathcal P(\{1,\dots,n\})$ via 
	\begin{equation}
		\mu_{2^n}(S) = p^{\# S}(1-p)^{n-\# S} 
	\end{equation}
	for $S\subset\{1,\dots,n\}$. Then, each $\mu_{2^n}$ is indeed a probability measure by the Binomial Theorem. For its moments, we have 
	\begin{equation}
		\sum_{S\subset\{1,\dots,n\}} \mu_{2^n}(S)^q 
			= \sum_{j=1}^n {n \choose j} (p^q)^j((1-p)^q)^{n-j} 
			= (p^q + (1-p)^q)^n. 
	\end{equation}
	We thus find 
	\begin{equation}
		\tau(q) 
			= \lim_{n\to\infty} \frac{\log \sum_S \mu_{2^n}(S)^q}
									{\log 2^n} 
			= \lim_{n\to\infty} \frac{n \log(p^q + (1-p)^q)}{n\log 2} 
			= \frac{\log(p^q + (1-p)^q)}{\log 2}. 
	\end{equation}
	In particular, we notice that $\tau$ is affine only for $p\in\{0,\frac12\}$ and otherwise strictly convex. Hence, the constructed sequences of measures are multifractal for all $p\in(0,\frac12)$. 
\end{eg}

\section{Spectral Determinant} \label{sec: spectral determinant}

In this section, we introduce the spectral determinant that will be a central tool in the proof of Theorem~\ref{thm: graphs dynamical argument}. We have already observed in~(\ref{eq: ph is sin}) that eigenfunctions are of the form $\ph^{\langle k\rangle}_i(x) = \alpha_i \sin kx$. Due to continuity~(\ref{eq: ph cont}) and the $\delta$-type condition~(\ref{eq: ph delta type}), these coefficients $\alpha_i$ give rise to an eigenfunction $\ph^{\langle k\rangle}$ of $\Delta_L$ if and only if $(\alpha_i)_{1\leq i\leq n}$ is a non-trivial element of the kernel of $M(kL,\frac\alpha k)$ where 
\begin{equation}
	M(x,y) 
		= \begin{pmatrix}
			\sin x_1 & -\sin x_2 & 0 & \cdots & 0 \\ 
			0 & \sin x_2 & -\sin x_3 & 0 & \vdots \\ 
			\vdots & \ddots & \ddots & \ddots & \vdots \\ 
			0 & \cdots & 0 & \sin x_{n-1} & -\sin x_n \\ 
			\cos x_1 & \cos x_2 & \cdots & \cdots & \cos x_n - y \sin x_n
		\end{pmatrix} 
\end{equation}
for $x\in\R^n$ and $y\in\R$. A Laplace expansion with respect to the last row shows that 
\begin{equation} \label{eq: determinant M(x,y)}
	\det M(x,y) = \sum_{j=1}^n \cos x_j \prod_{i\neq j} \sin x_i 
						- y \prod_{i=1}^n \sin x_i 
\end{equation}
which we will refer to as the \emph{spectral determinant} of $X^n$. It allows to write the spectrum of~$\Delta_L$ as 
\begin{equation} \label{eq: spectrum as vanishing locus}
	\sigma(\Delta_L) = \{k^2\mid k\neq0, \det M(kL,\tfrac\alpha k)=0\}. 
\end{equation}
We show that if $L$ has rationally independent entries, then $\ph^{\langle k\rangle}(v)\neq0$. We argue via the contraposition. If $\ph^{\langle k\rangle}$ vanishes at the central vertex~$v$, then due to continuity, 
\begin{equation}
	0 = \ph^{\langle k\rangle}(v) = \ph^{\langle k\rangle}_i(L_i) 
		= \alpha_i \sin kL_i
\end{equation}
for all $i$. Hence, $kL_i\in\pi\Z$ for all $i$ such that $\alpha_i\neq0$. Noticing that none of the columns of $M(kL,\frac\alpha k)$ vanishes, there exist at least two edges $i,i'$ with $\alpha_i,\alpha_{i'}\neq0$. Thus, there are rational dependencies in the entries of $L$. Therefore, if $L$ has rationally independent entries, then $\ph^{\langle k\rangle}$ does not vanish at the central vertex. 

In \cite{BG00}, Barra--Gaspard observed that, if the coupling constant $\alpha$ vanishes, the spectrum of $\Delta_L$ can be expressed with the help of a dynamical system on the $n$-dimensional torus where $n$ is the number of edges of $X^n$. Fruits of this approach were harvested in many works including \cite{KMW03,CdV15,RR20}. We will repeat parts of their argument here. So for the rest of this section, we take $\alpha=0$. In this case, the determinant in~(\ref{eq: determinant M(x,y)}) descends to the quotient $\T^n = \R^n/2\pi\Z^n$ where we define the function 
\begin{equation} \label{eq: spectral determinant}
	F : \T^n \longto \R, \, x \longmapsto \det M(x,0). 
\end{equation}
Notice that the spectral determinant only depends on the topology of the star graph, not on its geometry, i.e., $F$ does not depend on $L$. We denote the subset of $\T^n$ given as the vanishing locus of $F$ by 
\begin{equation}
	Z = \{x\in\T^n \mid F(x)=0\}. 
\end{equation}
The spectrum of the Laplacian $\Delta_L$ is then given by 
\begin{equation}
	\sigma(\Delta_L) = \{k^2\mid k>0, [kL]\in Z\} 
\end{equation}
with the projection map $[\,\cdot\,] : \R^n\longto\T^n$. The linear flow $t \longmapsto [tL]$ on the torus induced by $L$ intersects $Z$ exactly at the times $k$ such that $k^2$ is an eigenvalue of $\Delta_L$. We furthermore note that, by the Kronecker--Weyl Equidistribution Theorem, if the entries of $L$ are rationally independent, the flow is ergodic. In particular, in that case the intersection points of the linear flow generated by $L$ are dense in $Z$. On the set $A = \{x\in\T^n\mid x_i\neq 0\mod\pi \text{ for all } i\}$ of \emph{admissible} points, we can then write 
\begin{equation} \label{eq: spectral determinant factorization}
	F(x) = \left(\prod_{j=1}^n \sin x_j\right) 
				\sum_{i=1}^n \frac{\cos x_i}{\sin x_i}. 
\end{equation}
Thus, in the coordinates $y_i = \cot x_i$, we obtain 
\begin{equation}
	x\in Z \cap A 
		\iff \sum_{i=1}^n \cot x_i = \sum_{i=1}^n y_i = 0 
		\iff y\in(1,\dots,1)^\perp. 
\end{equation}
With the transformation 
\begin{equation} \label{eq: coordinates arccot}
	T : \R^n \longto [(0,\pi)^n], \, 
		y_i\longmapsto x_i = \operatorname{arccot} y_i
\end{equation}
this means that $T((1,\dots,1)^\perp) = Z\cap[(0,\pi)^n] \subset Z \cap A$. 

We observe that if $k^2$ is an eigenvalue of $\Delta_L$ and $L$ is such that $[kL]\notin A$, that is $kL_i\in\pi\Z$ for some $i$, then the eigenfunction $\ph^{\langle k\rangle}$ vanishes at the central vertex $v$. We have seen above that this implies rational dependencies in the entries of $L$. Therefore, if $L$ has rationally independent entries, then for all eigenvalues $k^2\in\sigma(\Delta_L)$, we have $[kL]\in Z\cap A$.

\section{Semiclassical Measures and Mass Measures} \label{sec: semiclassical measures}

In this section, we first show that mass measures and semiclassical measures are related as in~(\ref{eq: relation mass and semiclassical measures}). We then show a lemma that gives an approximation of mass measures via measures in $\mathcal M$ from Definition~\ref{defn: M} before proving Theorem~\ref{thm: graphs dynamical argument}. 

For an eigenvalue $k^2$ of $\Delta_L$ where $L$ has rationally independent entries, one easily verifies that the associated eigenfunction $\ph^{\langle k\rangle}$ is of the form 
\begin{equation} \label{eq: formula for eigenfunctions}
	\ph^{\langle k\rangle}_i(x) 
		= \ph^{\langle k\rangle}(v) \frac{\sin kx}{\sin kL_i} 
\end{equation}
on the $i$-th edge where $v$ denotes the central vertex. As normalization, we choose 
\begin{equation} \label{eq: normalization of eigenfuncitons}
	|\ph^{\langle k\rangle}(v)|^2 \sum_{i=1}^n \frac1{\sin^2 kL_i} = 1. 
\end{equation}
With this normalization, the associated mass measure $\mu_n^k$ will be a probability measure. 

\begin{lem}[Classification of semiclassical measures] \label{lem:class sc measures}
	Let $\alpha=0$ and fix a star graph $X^n$. For $L\in\R_+^n$ with rationally independent entries, the set normalized semiclassical measures is 
	\begin{equation}
		\left\{\left.\frac12 \sum_{j=1}^n \mu_i \mathrm{Leb}_i \,\right|\, 
			(\mu_1,\dots,\mu_n)\in\overline{\operatorname{im} G}\right\} 
	\end{equation}
	where the measures are subject to the normalization~(\ref{eq: normalization of eigenfuncitons}) of eigenfunctions and $G$ is the map $G : Z\cap A \longto \S^{n-1}_{\|\cdot\|_1}$ via 
	\begin{equation}
		x \longmapsto \frac1{\sum_i \sin^{-2} x_i}
				\left(\frac1{\sin^2 x_1},\dots,\frac1{\sin^2 x_n}\right). 
	\end{equation}
	More precisely, the weak*-limit $\mu_\mathrm{sc}$ of $|\ph^{\langle k_m\rangle}|^2 \mathrm{vol}_L$ exists if and only if $\mu_n^{k_m}$ converges, and we have 
	\begin{equation}
		\mu_\mathrm{sc} 
			= \frac12 \sum_{i=1}^n \left(\lim_{m\to\infty} \mu_n^{k_m}(i)\right) \mathrm{Leb}_i. 
	\end{equation}
\end{lem}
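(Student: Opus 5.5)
We will prove the two halves of the statement separately: first the formula for the weak*-limit, then the description of the set of semiclassical measures via $\overline{\operatorname{im} G}$.

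\textbf{Step 1: the weak*-limit.} By (\ref{eq: formula for eigenfunctions}) and the normalization (\ref{eq: normalization of eigenfuncitons}), we can write
\[ \ph^{\langle k\rangle}_i(x)=\sqrt{\mu_n^k(i)}\,\sin kx \]
up to a global sign. Here $\mu_n^k(i)=G([kL])_i$, since $|\ph(v)|^2/\sin^2kL_i$ is exactly the $i$-th entry of $G([kL])$. For a continuous $f$ on $(X,L)$ we then have
\[ \int_X f|\ph^{\langle k\rangle}|^2\,d\mathrm{vol}_L=\sum_{i=1}^n \mu_n^k(i)\int_0^{L_i} f_i(x)\sin^2 kx\,dx. \]
Writing $\sin^2kx=\frac12-\frac12\cos 2kx$, the Riemann--Lebesgue lemma gives $\int_0^{L_i}f_i\sin^2kx\,dx\to\frac12\int_0^{L_i}f_i\,dx$ as $k\to\infty$. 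Since $n$ is fixed, the $\mu_n^k(i)$ lie in $[0,1]$, and there are finitely many edges, we obtain
\[ \int f|\ph^{\langle k_m\rangle}|^2 = \tfrac12\sum_i \mu_n^{k_m}(i)\,\mathrm{Leb}_i(f) + o(1). \]

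Both directions of the equivalence follow from this. If $\mu_n^{k_m}$ converges, the limit formula is immediate. Conversely, suppose the weak*-limit exists. Test with $f=\mathbbm 1$-type continuous functions supported in the interior of edge $i$ and having nonzero Lebesgue integral. This shows that each $\mu_n^{k_m}(i)$ converges, so $\mu_n^{k_m}$ converges.

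We must also treat the case where $k_m$ does not tend to infinity. The spectrum is discrete, so along a bounded subsequence $k_m$ is eventually constant. In that case both sides trivially converge, and the "semiclassical" limit is taken along $k_m\to\infty$ as intended. The formula above is meant for $k_m\to\infty$.

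\textbf{Step 2: the set of semiclassical measures.} Because $L$ is rationally independent, Section~\ref{sec: spectral determinant} gives $[kL]\in Z\cap A$ for every eigenvalue. Hence every mass measure is $G([kL])$, and every limit lies in $\overline{\operatorname{im}G}$.

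For the converse, take $\mu=\lim_j G(x_j)$ with $x_j\in Z\cap A$. It suffices to produce eigenvalues $k$ with $k\to\infty$ and $G([kL])$ close to $G(x_j)$. Three facts supply this:
\begin{enumerate}
\item $G$ is continuous on the open set $A$.
\item Near $x_j\in Z\cap A$, the set $Z\cap A$ is the smooth hypersurface $T((1,\dots,1)^\perp)$, possibly after translating coordinates by $\pi$.
\item The linear flow $t\mapsto[tL]$ is equidistributed by Kronecker--Weyl.
\end{enumerate}
By (iii), for arbitrarily large $t$ the flow enters any small ball around $x_j$. Near $x_j$ the function $F$ changes sign across $Z$. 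Moreover $\partial_t F([tL])=\nabla F\cdot L$, and $L$ is transverse to $Z$: in the $y$-coordinates, $\sum_i \partial_t \cot(tL_i)=-\sum_i L_i/\sin^2(tL_i)<0$. So a flow segment through a small ball crosses $Z$ inside a slightly larger ball. This produces an eigenvalue $k$ with $[kL]$ near $x_j$, and taking $k$ arbitrarily large is possible. A diagonal argument then yields a sequence $k_m\to\infty$ with $\mu_n^{k_m}\to\mu$, and Step 1 identifies the semiclassical measure.

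The main obstacle is this transversality/crossing argument. Density of the flow alone does not guarantee hitting $Z$ near $x_j$, so we must use that $\sum\cot$ is strictly monotone along the flow on each cell $(m\pi/L_i$-intervals) to force a genuine zero nearby.
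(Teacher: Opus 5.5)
Your argument is correct, and Step 1 is the paper's proof. You expand $|\ph^{\langle k\rangle}|^2\mathrm{vol}_L(f)$ edge by edge via (\ref{eq: formula for eigenfunctions}), apply Riemann--Lebesgue to $\int_0^{L_i} f\sin^2 kx\,dx$, test with functions living on a single edge to get the equivalence, and identify $|\ph^{\langle k\rangle}(v)|^2/\sin^2 kL_i$ with $G_i([kL])=\mu_n^k(i)$.

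Two small remarks on Step 1:
\begin{itemize}
\item Since $\ph_i^{\langle k\rangle}(x)=\ph^{\langle k\rangle}(v)\sin kx/\sin kL_i$, the sign relating $\ph_i^{\langle k\rangle}$ to $\sqrt{\mu_n^k(i)}\sin kx$ depends on $i$; it is not a global sign. This is harmless, because only $|\ph_i|^2$ enters.
\item Semiclassical measures are taken along subsequences of the eigenfunction sequence, and the spectrum is discrete. So $k_m\to\infty$ automatically, and the bounded case needs no separate discussion.
\end{itemize}

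Step 2 is where you go beyond the paper. The paper's written proof only establishes the ``more precisely'' part. It never shows that every point of $\overline{\operatorname{im}G}$ is actually attained by a semiclassical measure. It leaves this to three external or implicit ingredients:
\begin{itemize}
\item the assertion in Section~\ref{sec: spectral determinant} that intersection points of the flow with $Z$ are dense, which is deduced there from Kronecker--Weyl ergodicity alone;
\item Colin de Verdi\`ere's general classification;
\item transversality of $L$ to $Z\cap A$, cited from \cite[Lemma 3.3]{RR20} in the proof of Theorem~\ref{thm: graphs dynamical argument}.
\end{itemize}

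Your intermediate-value argument is a self-contained proof of exactly that density. The function $\sum_i\cot x_i$ is continuous on the open set $A$ and strictly decreasing along the direction $L$. So a flow segment passing near $x_j$ must cross $Z$ close to $x_j$, and ergodicity supplies such segments at arbitrarily large times. You also correctly point out why density of the flow alone would not suffice. Your fact (ii), the smooth-hypersurface description, is not actually needed; only the sign change along $L$ is.

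So your version proves the full set equality by elementary means, at the cost of a few lines. The paper's version is shorter but relies on cited results for the surjectivity half.
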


\begin{rem}
	The previous lemma is a special case of the classification result which is due to Colin~de~Verdière~\cite{CdV15} for general quantum graphs. The description features the Gauß map which associates to each regular point of $Z$ the unit normal vector $\nu_x$ at $x$. As it turns out, all admissible points $x\in Z\cap A$ are the regular points of $Z$ and the unit normal vector 
	\begin{equation}
		\nu_x = -\frac{\nabla F(x)}{\|\nabla F(x)\|_1} 
	\end{equation}
	is precisely given by $G(x)$. 
\end{rem}

\begin{proof}
	An eigenfunction $\ph^{\langle k\rangle}$ induces the measure $|\ph^{\langle k\rangle}|^2 \operatorname{vol}_L$. For a continuous function $f$ on $(X,L)$, we compute with~(\ref{eq: formula for eigenfunctions}) 
	\begin{equation}
		|\ph^{\langle k\rangle}|^2 \operatorname{vol}_L (f) 
			= \sum_{i=1}^n \int_0^{L_i} f(x) |\ph^{\langle k\rangle}_i(x)|^2 dx 
			= \sum_{i=1}^n \frac{|\ph^{\langle k\rangle}(v)|^2}{\sin^2 kL_i} 
					\int_0^{L_i} f(x) \sin^2 kx \,dx.  
	\end{equation}
	The integral converges to $\frac12 \int_0^{L_i} f(x) dx = \frac12 \mathrm{Leb}_i(f)$ as $k\to\infty$ due to Riemann--Lebesgue. Considering functions $f$ whose integral on one edge is 1 and that vanish on the other edges, we find that along a subsequence $k_m^2$ of eigenvalues the measures $|\ph^{\langle k_m\rangle}|^2 \operatorname{vol}_L$ have a weak*-limit if and only if $G(k_m L)$ converges. The limit is then given by 
	\begin{equation}
		\lim_{m\to\infty} |\ph^{\langle k_m\rangle}|^2 \operatorname{vol}_L 
			= \sum_{i=1}^n \left(\lim_{m\to\infty} 
				\frac{|\ph^{\langle k_m\rangle}(v)|^2}{\sin^2 k_m L_i}\right) 
				\frac12 \,\mathrm{Leb}_i 
			= \frac12 \sum_{i=1}^n \left(\lim_{m\to\infty} G_i(k_m L)\right) 
				\mathrm{Leb}_i 
	\end{equation}
	where we used the normalization~(\ref{eq: normalization of eigenfuncitons}) and the definition of $G$. We are left to show that for a point $[kL]\in Z\cap A$, the coordinates of $G([kL])$ give rise to the mass measure $\mu_n^k$. Indeed, with the formula for the coordinate functions of $\ph^{\langle k\rangle}$ from~(\ref{eq: formula for eigenfunctions}) and the normalization of eigenfunctions introduced in~(\ref{eq: normalization of eigenfuncitons}), we have 
	\begin{equation} \label{eq: mass measure via G}
		\|\ph_i^{\langle k\rangle}\|_\infty^2 
			=  |\ph_i^{\langle k\rangle}(v)|^2 \frac1{\sin^2 kL_i} 
			= \frac{\sin^{-2} kL_i}{\sum_j \sin^{-2} kL_j} 
			= G_i([kL]). 
	\end{equation}
	With the definition of mass measure in~(\ref{eq: mass measure}), we find $\mu_n^k(i) = G_i([kL])$. This finishes the proof. 
\end{proof}

In the coordinates $T$ we introduced in (\ref{eq: coordinates arccot}), $G$ takes the form 
\begin{equation}
	G(T(y)) = \frac{(1+y_1^2,\dots,1+y_n^2)}{n+\|y\|_2^2} 
\end{equation}
due to the trigonometric identity $\sin(\operatorname{arccot} t) = (1+t^2)^{-1/2}$. 

Before proving Theorem~\ref{thm: graphs dynamical argument}, we show the following lemma. It essentially states that measures in the set $\mathcal M$ from Definition~\ref{defn: M} lie close to mass measures. 

\begin{lem}[Approximation of mass measures] \label{lem: approximate G via M}
	For every $\mu\in\mathcal M$ and $\eta>0$, there exist an admissible point $x\in Z\cap A$ such that 
	\begin{equation}
		\|G(x)-\mu\|_1 \leq \eta. 
	\end{equation}
\end{lem}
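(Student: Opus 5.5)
We work in the coordinates $T$ of~(\ref{eq: coordinates arccot}). There $Z\cap[(0,\pi)^n]=T((1,\dots,1)^\perp)$ and
\[
G(T(y))=\frac{(1+y_1^2,\dots,1+y_n^2)}{n+\|y\|_2^2}.
\]
So it suffices to find $y\in\R^n$ with $\sum_i y_i=0$ and $G(T(y))$ within $\eta$ of $\mu$; then $x=T(y)\in Z\cap A$ is admissible.

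Let $\mu\in\mathcal M$, with signs $\ep\in\{\pm1\}^n$ such that $\sum_i\ep_i\sqrt{\mu_i}=0$. The natural choice is a scaled signed square root,
\[
y^{(R)}_i:=R\,\ep_i\sqrt{\mu_i},\qquad R>0.
\]
By the defining property of $\mathcal M$, $\sum_i y^{(R)}_i=R\sum_i\ep_i\sqrt{\mu_i}=0$, so $y^{(R)}\in(1,\dots,1)^\perp$ for every $R$. Since $\sum_i\mu_i=1$, we have $\|y^{(R)}\|_2^2=R^2$, and therefore
\[
G_i(T(y^{(R)}))=\frac{1+R^2\mu_i}{n+R^2}.
\]

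We then estimate the distance to $\mu$ directly:
\[
G_i(T(y^{(R)}))-\mu_i=\frac{1+R^2\mu_i-(n+R^2)\mu_i}{n+R^2}=\frac{1-n\mu_i}{n+R^2}.
\]
Summing over $i$ gives
\[
\|G(T(y^{(R)}))-\mu\|_1\le\frac{\sum_i(1+n\mu_i)}{n+R^2}=\frac{2n}{n+R^2},
\]
which tends to $0$ as $R\to\infty$ with $n$ fixed. Choosing $R$ with $2n/(n+R^2)\le\eta$ finishes the argument.

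There is no real obstacle here. The only points needing care are that $T$ maps into $(0,\pi)^n$, so $x=T(y)$ has no coordinate in $\pi\Z$ and is admissible, and that $\mu\in\R_+^n$ is exactly what allows the square roots and the sign cancellation from Definition~\ref{defn: M}. Measures with zero entries are not covered, but they are not needed. Note also that $\eta$ is only approached, not attained: the extra "$+1$" from $\sin^{-2}=1+\cot^2$ prevents $G$ from hitting $\mu$ exactly, which is why the lemma is stated as an approximation.
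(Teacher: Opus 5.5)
Your proof is correct and is essentially the paper's argument: the paper takes $x=T(y/\gamma)$ with $y_i=\ep_i\sqrt{\mu_i}$ and $\gamma=\sqrt{\eta/(2n)}$, which is your $y^{(R)}$ with $R=1/\gamma$, and bounds $|G_i(x)-\mu_i|\le(1+n\mu_i)\gamma^2$ as you do (you keep the denominator $n+R^2$, which gives a slightly sharper bound). Your closing remark that $G$ cannot hit $\mu$ exactly is only heuristic, since along your ray $G(T(y^{(R)}))=\mu$ holds precisely when $\mu$ is uniform, but the remark plays no role in the argument.
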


\begin{proof}
	Set $\gamma := \sqrt{\eta/(2n)}$. We start by noting that due to the definition of $\mathcal M$, we have that $\mu_i>0$ and there exists $\ep\in\{\pm1\}^n$ such that $\sum_i \ep_i \sqrt{\mu_i} = 0$. Hence, defining 
	\begin{equation}
		y_i := \ep_i \sqrt{\mu_i}, 
	\end{equation}
	we have both $\sum_i y_i = 0$ and $\|y\|_2=\|\mu\|_1=1$. Then setting $x=T(\frac y\gamma)$, we have 
	\begin{equation}
		x\in T((1,\dots,1)^\perp) = Z\cap[(0,\pi)^n] \subset Z\cap A 
	\end{equation}
	as we have noticed before. We now compute the $i$-th component of $G(x)$ and its distance to $\mu_i$. We have 
	\begin{equation}
		G_i(x) = G_i(T(\tfrac y\gamma)) 
			= \frac{1+\frac{\mu_i}{\gamma^2}}{n+\frac1{\gamma^2}} 
			= \frac{\gamma^2+\mu_i}{n\gamma^2+1}. 
	\end{equation}
	For the distance to $\mu_i$, we find 
	\begin{equation}
		|G_i(x) - \mu_i| 
			\leq \left|\gamma^2+\mu_i - (n\gamma^2+1)\mu_i\right| 
			\leq (1+n\mu_i)\gamma^2 
			= \left(\frac1n + \mu_i\right)\frac\eta2. 
	\end{equation}
	Thus, summing over $i$ yields $\|G(x)-\mu\|_1 \leq \eta$ which finishes the proof. 
\end{proof}

We are now ready to prove Theorem~\ref{thm: graphs dynamical argument} which establishes that any possible mass exponent can be produced as a sequence of mass measures on star graphs with arbitrary rationally independent edge lengths. 

\begin{proof}[Proof of Theorem~\ref{thm: graphs dynamical argument}]
	We have $(X^n,L^n)$ a sequence of star graphs with rationally independent edge lengths prescribed by $L^n$ as well as a function $\tau\in\mathcal P$, i.e., $\tau$ satisfying the properties \emph{a})-\emph{c}) of a mass exponent. We show that there exists a sequence $k_n^2$ of eigenvalues of $\Delta_{L^n}$ whose associated sequence of mass measures on $(X^n,L^n)$ has mass exponent $\tau$. 
	
	Due to the characterization of mass exponents and Corollary~\ref{cor: E attainable via M}, we can find a sequence of measures $\mu_n\in\mathcal M$ whose mass exponent is given by $\tau$. Furthermore, by Lemma~\ref{lem: approximate G via M}, there exist $x_n\in Z\cap A$ such that 
	\begin{equation} \label{eq: in proof of main 1}
		\|G(x_n) - \mu_n\|_1 \leq \frac1{2n}. 
	\end{equation}
	Since the entries of $L^n$ are rationally independent, the flow $t\longmapsto[tL^n]$ is ergodic. Furthermore, $L^n$ is transverse to $Z\cap A$~\cite[Lemma 3.3]{RR20}. Thus, for any $\ep>0$, there exist (in fact infinitely many) $k_n>0$ such that $[k_n L^n] \in B_\ep(x_n)\cap Z\cap A$. In particular, $k_n^2$ is an eigenvalue of $\Delta_{L^n}$. Since $G$ is continuous on $Z\cap A$, we find, upon shrinking $\ep$ if necessary, that 
	\begin{equation} \label{eq: in proof of main 2}
		\|G(x_n) - G([k_n L^n])\|_1 \leq \frac1{2n}. 
	\end{equation}
	Since, under the identification~(\ref{eq: identification set of measures and conv}), $G([k_n L^n]) = \mu_n^{k_n}$, equations (\ref{eq: in proof of main 1}) and (\ref{eq: in proof of main 2}) yield 
	\begin{equation}
		\|\mu_n^{k_n} - \mu_n\|_1 \leq \frac1n. 
	\end{equation}
	Now from Corollary~\ref{cor: stability mass exponent}, we get that the mass exponents of $\mu_n^{k_n}$ and $\mu_n$ coincide, that is that the mass exponent of $\mu_n^{k_n}$ is $\tau$, thereby finishing the proof. 
\end{proof}

\section{From Measures to Graphs} \label{sec: graph from measure}

The goal of this section is to prove Theorem~\ref{thm: quasi-equilateral stars}. We recall that for a star graph $(X,L)$ with rationally independent edge lengths, by (\ref{eq: spectrum as vanishing locus}), we have that $k^2>0$ is an eigenvalue of $\Delta_L$ if and only if $\sum_i \cot kL_i - \frac\alpha k$ vanishes. As a function of $k$, this sum has poles at 
\begin{equation}
	p_{m,i} := \frac{m\pi}{L_i} \in \frac\pi{L_i} \Z. 
\end{equation}
The distance between two poles can be estimated as 
\begin{equation} \label{eq: poles estimate}
	|p_{m,i} - p_{m,i'}| \asymp m\ep|\ell_i - \ell_{i'}|.
\end{equation}
We now turn to the proof of Theorem~\ref{thm: quasi-equilateral stars}. 

\begin{proof}[Proof of Theorem~\ref{thm: quasi-equilateral stars}]
	By Taylor's theorem, we have 
	\begin{equation}
		\sin^{-2} kL_i = \sin^{-2}(k-p_{m,i})L_i \asymp |k-p_{m,i}|^{-2}. 
	\end{equation}
	If $i=i_0,i_0+1$, due to the spectral repulsion condition~(\ref{eq: spectral repulsion}) and the estimate for the distance between poles in~(\ref{eq: poles estimate}), we have 
	\begin{equation}
		|k-p_{m,i}| 
			\asymp |p_{m,i_0}-p_{m,i_0+1}| 
			\asymp m\ep|\ell_{i_0}-\ell_{i_0+1}| 
			= \frac{m\ep}{\sqrt{\nu_n(1)}}. 
	\end{equation}
	If $i\neq i_0,i_0+1$, we can estimate the distance from $k$ to a pole $p_{m,i}$ from below and above via the distance from one of its neighboring poles $p_{m,i_0}$ and $p_{m,i_0+1}$. More specifically, due to~(\ref{eq: poles estimate}) for $i<i_0$, we have 
	\begin{equation}
		m\ep|\ell_{i_0} - \ell_i| 
			\asymp |p_{m,i_0} - p_{m,i}| 
			\leq |k - p_{m,i}| 
			\leq |p_{m,i_0+1} - p_{m,i}| 
			\asymp m\ep|\ell_{i_0+1} - \ell_i| 
	\end{equation}
	and similarly for $i>i_0+1$, we have 
	\begin{equation}
		m\ep|\ell_{i_0+1} - \ell_i| 
			\lesssim |k - p_{m,i}| 
			\lesssim m\ep|\ell_{i_0} - \ell_i|. 
	\end{equation}
	We notice that in both cases, the lower bound is exactly $\frac{m\ep}{\sqrt{\nu_n(i)}}$. The upper bounds, in both cases are 
	\begin{equation}
		m\ep \left|\frac1{\sqrt{\nu_n(1)}} + \frac1{\sqrt{\nu_n(i)}}\right| 
			\leq \frac{2m\ep}{\sqrt{\nu_n(i)}}. 
	\end{equation}
	We hence find $\sin^{-2}kL_i \asymp (m\ep)^{-2} \nu_n(i)$ for all $i\neq i_0,i_0+1$. Thus, by~(\ref{eq: mass measure via G}) and because $\nu_n$ is a probability measure, 
		\[ \mu_n^k(i) 
			\asymp \begin{cases}
				\nu_n(i), & i\neq i_0,i_0+1, \\ 
				\nu_n(1), & i = i_0,i_0+1. 
			\end{cases} \] 
	We now apply Lemma~\ref{lem: stability mass exponent inequality} twice where the finite set is $\{i_0,i_0+1\}$, and we obtain that the mass exponents of $\mu_n^k$ and $\nu_n$ coincide.  
\end{proof}

\printbibliography

\end{document}